\def\BibTeX{{\rm B\kern-.05em{\sc i\kern-.025em b}\kern-.08em
    T\kern-.1667em\lower.7ex\hbox{E}\kern-.125emX}}
\newtheorem{theorem}{Theorem}
\newtheorem{lemma}{Lemma}
\newtheorem{proposition}{Proposition}
\newtheorem{corollary}{Corollary}
\newcounter{hints}
\renewcommand{\thehints}{\alph{hints}}
\newcommand{\hintedrel}[2][]{%
  \stepcounter{hints}%
  \if\relax\detokenize{#1}\relax\else\csxdef{hint@#1}{\thehints}\fi
  \mathrel{\overset{\textrm{(\thehints)}}{\vphantom{\le}{#2}}}%
}
\newcommand{\lb}{\left(}
\newcommand{\rb}{\right)}
\newcommand{\ls}{\left[}
\newcommand{\rs}{\right]}
\newcommand{\uy}{\underline{\mathbf{y}}}
\newcommand{\matX}{\mathbf{X}}
\newcommand{\matA}{\mathbf{A}}
\newcommand{\mc}[1]{\mathcal{#1}}
\newcommand{\mb}[1]{\mathbb{#1}}
\newcommand{\ul}[1]{\underline{#1}}
\long\def\symbolfootnote[#1]#2{\begingroup%
\def\thefootnote{\fnsymbol{footnote}}\footnote[#1]{#2}\endgroup}
\renewcommand\subsubsection{\@startsection{subsubsection}{3}{0mm}{0ex plus 0.1ex minus 0.1ex}%
{0.3ex plus 0ex}{\normalfont\normalsize\itshape}}%
\title{Computationally Tractable Algorithms for Finding a Subset of Non-defective Items from a Large Population} 
\author{Abhay Sharma and Chandra R. Murthy\\
  Dept. of ECE,  Indian Institute of Science, Bangalore 560 012, India \\
  abhay.bits@gmail.com, cmurthy@ece.iisc.ernet.in
         }
\begin{document}
\maketitle
\vspace{-1cm}
\begin{abstract}
In the classical non-adaptive group testing setup,  pools of items are tested 
together, and the main goal of a recovery algorithm 
is to identify the \emph{complete defective set} given the outcomes of
different group tests. In contrast, the main goal of a \emph{non-defective subset recovery} 
algorithm is to identify a \emph{subset} of non-defective items given the test
outcomes.
In this paper, we present a suite of computationally efficient and analytically tractable 
non-defective subset recovery algorithms. 
By analyzing the probability of error of the algorithms, we obtain bounds on the number of tests required for  
non-defective subset recovery with arbitrarily small probability of error.
Our analysis accounts for the impact of both the additive noise 
(false positives) and dilution noise (false negatives).
By comparing with the information theoretic lower bounds, we show that the upper 
bounds on the number of tests are order-wise tight up to a $\log^2K$ factor, where 
$K$ is the number of defective items.
We also provide simulation results that compare the relative performance 
of the different algorithms and provide further insights into their practical utility. 
The proposed algorithms significantly outperform the straightforward approaches of testing items one-by-one, and of 
first identifying the defective set and then choosing the non-defective items from the complement set, 
in terms of the number of measurements required to ensure a given success rate.
\end{abstract}

\begin{keywords}
Non-adaptive group testing, boolean compressed sensing, non-defective subset recovery, inactive subset identification, linear program analysis, combinatorial matching pursuit, sparse signal models.
\end{keywords}

\section{Introduction}
\label{sec:Introduction}
The general group testing framework\cite{Dorfman_GT, du2006} considers a large set of $N$ items, in which 
an unknown subset of $K$ items possess a certain testable property, e.g., the 
presence of an antigen in a blood sample, 
presence of a pollutant in an air sample, etc. This subset is referred to as 
the ``defective'' subset, and
its complement is referred to as the ``non-defective'' or ``healthy'' subset.
A defining notion of this framework is the \emph{group test}, a test that operates 
on a \emph{group} of items and provides a binary indication as to whether 
or not the property of interest is present collectively in the group.
A \emph{negative} indication implies that none of the tested items 
are defective. A \emph{positive} indication
implies that at least one of the items is defective. 
In practice, due to the hardware and test procedure limitations, the group tests are 
not completely reliable.
Using the outcomes of multiple such (noisy) group tests, a
basic goal of group testing is to reliably identify the defective set of items with as few
tests as possible. 
The framework of group testing has found applications in diverse
engineering fields such as industrial testing\cite{sobel1959group}, DNA sequencing\cite{ND00,du2006},
data pattern mining\cite{Macula_data_pattern,MV_ICML_2013,DashMV14}, 
medical screening\cite{du2006}, multi-access communications\cite{du2006,Wolf_multi_access}, 
data streaming\cite{Cormode_whatshot, Gilbert_GT}, etc.

One of the popular versions of the above theme is the non-adaptive 
group testing (NGT), where different tests are conducted simultaneously, i.e.,
the tests do not use information provided by the outcome of any other test. 
NGT is especially useful when the individual
tests are time consuming, and hence the testing time associated with 
adaptive, sequential testing is prohibitive.
An important aspect of NGT is how to determine the set of 
individuals that go into each group test. Two main approaches exist:
a combinatorial approach, see e.g., \cite{kautz_nonrand_supimp, Erdos_cover,Ruszinko94},
which considers explicit constructions of test matrices/pools; and 
a probabilistic pooling approach, see e.g., \cite{Dachkov_bound, Sebo1985, Gilbert_GT},
where the items included in the group test are chosen uniformly at random
from the population. Non-adaptive group testing has also been referred to as \emph{boolean} compressed sensing in the recent literature~\cite{Atia_BooleanCS, jaggi_gtalgo}.

In this work, in contrast to the defective set identification problem, we study the 
\emph{healthy/non-defective subset identification} problem in the
noisy, non-adaptive group testing with random pooling (NNGT-R) framework.
There are many applications where the goal is to identify
only a small subset of non-defective items. 
For example, consider the spectrum hole search problem in a cognitive 
radio (CR) network setup. It is known that 
the primary user occupancy is sparse in the frequency domain, over a wide band
of interest\cite{Cabric05cognitive,FCCreport}. 
This is equivalent to having a small subset of defective 
items embedded in a large set of candidate frequency bins.
The secondary users do not need to identify all the frequency bins occupied by the
primary users; they only need to discover a small number of unoccupied sub-bands to 
setup the secondary communications. This, in turn, is a non-defective subset 
identification problem when the bins to be tested for primary occupancy can be
pooled together into group tests~\cite{AC_tvt_14}.
In \cite{AC13}, using information theoretic arguments,
it was shown that compared to the conventional approach of identifying the non-defective 
subset by first identifying the defective set, 
directly searching for an $L$-sized non-defective subset offers a reduction in the number of tests, 
especially when $L$ is small compared to $N-K$.  
The achievability results in \cite{AC13} were obtained by analyzing the performance of the exhaustive
search based algorithms which are not practically implementable.
In this paper, we develop computationally efficient algorithms for non-defective subset 
identification in an NNGT-R framework.

We note that the problem of non-defective subset identification is a generalization of the defective set identification problem, in the sense that, when $L=N-K$, the non-defective subset identification problem is
identical to that of identifying the $K$ defective items.
Hence, by setting $L=N-K$, the algorithms presented in this work can be related to
algorithms for 
finding the defective set; see \cite{du2006} for an excellent collection of 
existing results and references.
In general, for the NNGT-R framework, three broad approaches have been adopted
for defective set recovery\cite{jaggi_gtalgo}.
First, the row based approach (also frequently referred to as the ``na\"{i}ve'' decoding algorithm)
finds the defective set by finding \emph{all} the 
non-defective items. The survey in \cite{chen2008survey} lists many variants of 
this algorithm for finding defective items. More recently, the {\bf CoCo} 
algorithm was studied in \cite{jaggi_gtalgo}, where an interesting connection 
of the na\"{i}ve decoding algorithm with the classical coupon-collector problem was established
for the noiseless case.
The second popular decoding approach is based on the idea of
finding defective items iteratively (or greedily) by matching 
the column of the test matrix corresponding to a given item with the test outcome 
vector\cite{du2006, vetterli_ngt, jaggi_gtalgo,Johnson_2014}. 
For example, in \cite{vetterli_ngt}, column matching consists of taking set differences 
between the set of pools where the item is tested and the set of pools with positive outcomes.
Another variant of matching is considered in \cite{jaggi_gtalgo}, where, for a given column, the ratio
of number of times an item is tested in pools with positive and negative outcomes is computed and
compared to a threshold.
A recent work, \cite{Yoo_arxiv_2013}, investigates the problem of finding zeros in 
a sparse vector in the compressive sensing framework, and also proposes
a greedy algorithm based on correlating the columns of the sensing matrix 
 (i.e., column matching) with the output
vector.\footnote{Note that directly computing correlations between column vector for an item and the test
outcome vector will not work in case of group testing, as both the vectors are boolean.
Furthermore, positive and negative pools have asymmetric roles in the group testing
problem.}
The connection between defective set identification in group testing and 
the sparse recovery in compressive sensing was further highlighted in \cite{mtov_mtov_2012,jaggi_gtalgo, DashMV14}, where  relaxation based linear programming algorithms have been proposed for defective 
set identification in group testing. 
A class of linear programs to solve the defective set identification problem 
was proposed by letting the boolean variables take real values (between $0$ and $1$) and
setting up inequality or equality constraints to model the outcome of each pool.

In this work, we develop novel algorithms for identifying a non-defective subset in an
NNGT-R framework.
We present error rate analysis for each algorithm and derive 
non-asymptotic upper bounds on the average error rate.
The derivation leads to a theoretical guarantee on the sample complexity, i.e., 
the number of tests required to identify a subset of non-defective items with arbitrarily small 
probability of error. 
We summarize our main contributions as follows:
\begin{itemize}
  \item We propose a suite of computationally efficient and analytically tractable algorithms
    for identifying a non-defective subset of given size in a NNGT-R framework: {\bf RoAl} (row based),  {\bf CoAl} (column based) and {\bf RoLpAl}, {\bf RoLpAl++}, {\bf CoLpAl} (Linear Program (LP) relaxation based) algorithms.
  \item We derive bounds on the number of tests that guarantee successful non-defective 
    subset recovery for each algorithm. The derived bounds are a function of
    the system parameters, namely, the number of defective items, the size of non-defective subset, 
    the population size, and the noise parameters. Further,
    \begin{itemize}
      \item The presented bounds on the number of tests for different algorithms are within
	$O(\log^2 K)$ factor, where $K$ is the number of defective items, of the information 
	theoretic lower bounds which were derived in our past work\cite{AC13}.
      \item For our suite of LP based algorithms, we present a novel 
	analysis technique based on 
	characterizing the recovery conditions via the dual variables associated with 
	the LP, which may be of interest in its own right.
\end{itemize}
  \item Finally, we present numerical simulations to compare the relative 
    performance of the algorithms. The results also illustrate the significant benefit in finding non-defective items directly, compared to using the existing defective set recovery methods or testing items one-by-one, in terms of the number of group tests required.
\end{itemize}

%
The rest of the paper is organized as follows.
Section~\ref{sec:PrbSetup} describes the NNGT-R framework and the problem setup.
The proposed algorithms and the main analytical results are presented 
in Section~\ref{result_alg}. The proofs of the main results are provided in
Section~\ref{sec_thm_proofs}.
Section~\ref{sec:simulations} discusses the numerical simulation results, and 
the conclusions are presented in Section~\ref{sec_conclusions}.
We conclude this section by presenting the notation followed throughout the paper.

\noindent \textbf{Notation:}
Matrices are denoted using uppercase bold letters 
and vectors are denoted using an underline.
For a given matrix $\matA$, $\ul{a}_i^{(r)}$ and $\ul{a}_i$ denote the $i^{\text{th}}$ 
row and column, respectively. 
For a given index set $S$, $\matA(S,:)$ denotes a sub-matrix of $\matA$ where only the 
rows indexed by set $S$ are considered. Similarly, $\matA(:,S)$ or $\matA_S$ denotes a 
sub-matrix of $\matA$ that consists only of columns indexed by set $S$. 
For a vector $\ul{a}$, $\ul{a}(i)$ denotes its $i^{\text{th}}$ component; 
$\text{supp}(\ul{a}) \triangleq \{j: \ul{a}(j) > 0 \}$;
$\{\ul{a} = c\}$ denotes the set $\{j: \ul{a}(j) = c \}$ for any $c$.
In the context of a boolean vector, $\ul{a}^c$ denotes the component wise 
boolean complement of $\ul{a}$.
$\ul{1}_{n}$ and $\ul{0}_{n}$ denote an all-one and all-zero vector, respectively, of
size $n \times 1$.
We denote the component wise inequality as $\ul{a} \preccurlyeq \ul{b}$, i.e., it means
$\ul{a}(i) \le \ul{b}(i)~\forall~i$.
Also, $\ul{a} \circ \ul{b}$ denotes the component-wise product,
i.e., $(\ul{a} \circ \ul{b})(i) = \ul{a}(i) \ul{b}(i), ~\forall~i$.
The boolean OR operation is denoted by ``$\bigvee$''.
For any $q \in [0, 1]$, $\mathcal{B}(q)$ denotes the Bernoulli distribution 
with parameter $q$.
$\mathbb{I}_{\mathcal{A}}$ denotes the indicator function and returns $1$ if
the event $\mathcal{A}$ is true, else returns $0$.
Note that, $x(n)=O(y(n))$ implies that $\exists~B>0$ and $n_0 > 0$, such that 
$|x(n)| \le B |y(n)|$ for all $n > n_0$. Further, 
$x(n)=\Omega(y(n))$ implies that $\exists~B>0$ and $n_0 > 0$, such that 
$|x(n)| \ge B |y(n)|$ for all $n > n_0$. Also, $x(n) = o(y(n))$ implies that for every $\epsilon > 0$, there exists an $n_0 > 0$ such that $|x(n)| \le \epsilon |y(n)|$ for all $n > n_0$. 
All logarithms in this papers are to the base $e$.
Also, for any $p \in [0, 1]$, $H_b(p)$ denotes the binary entropy in nats, i.e.,
$H_b(p) \triangleq -p \log(p) - (1-p) \log(1-p)$.

\section{Signal Model} \label{sec:PrbSetup}
In our setup, we have a population of $N$ items, out of which $K$ are defective.
Let $\mathcal{G} \subset [N]$ denote the
defective set, such that $| \mathcal{G} | = K$. 
We consider a non-adaptive group testing framework with random pooling~\cite{Atia_BooleanCS, du2006, malyutov_1, jaggi_gtalgo}, where the items to be pooled in a given test
are chosen at random from the population.
The group tests are defined
by a boolean matrix, $\matX \in \{0,1\}^{M \times N}$, that assigns
different items to the $M$ group tests (pools). 
The $j^{\text{th}}$ pool tests the items corresponding to the columns 
with $1$ in the $j^{\text{th}}$ row of $\matX$.
We consider an i.i.d.\ random Bernoulli
measurement matrix\cite{Atia_BooleanCS}, where each $X_{ij} \sim \mathcal{B}(p)$ for some
$0 < p < 1$. 
Thus, $M$ randomly generated pools are specified. In the above, $p$ is a 
design parameter that controls the average group size, 
i.e., the average number of items being tested in a single group test.
In particular, we choose $p=\frac{\alpha}{K}$, and a specific value of $\alpha$ 
is chosen based on the analysis of different algorithms.

If the tests are completely reliable, then
the output of the $M$ tests is given by the boolean OR of the 
columns of $\matX$ corresponding to the defective set $\mathcal{G}$.
However, in practice, the outcome of a group test may be unreliable. Two popular noise models that are 
  considered in the literature on group testing are\cite{vetterli_ngt,Atia_BooleanCS,jaggi_gtalgo}:
(a) An \emph{additive} noise model, where there is a 
probability, $q \in (0,0.5) $, that the outcome of a group test containing
only non-defective items turns out to be positive (Fig.~\ref{figure:noise_model});  
(b) A \emph{dilution} model, where there 
is a probability, $u \in (0,0.5)$, that a given item does not participate
in a given group test (see Fig.~\ref{figure:noise_model}).
Let $\underline{d}_i \in \{0,1\}^M$. Let 
$\underline{d}_i(j) \sim \mathcal{B}(1-u)$ be chosen independently 
for all $j=1, 2, \ldots, M$ and for all $i=1, 2, \ldots, N$. Let 
$\mathbf{D}_i \triangleq \text{diag}(\underline{d}_i)$.
The output vector $\underline{y} \in \{0, 1\}^M$ can be represented as
\begin{align} \label{eq:gtmodel}
  \underline{y} = \bigvee_{i=1}^{N} \mathbf{D}_i 
  \underline{x}_i\mathbb{I}_{\{i \in \mathcal{G}\}}
  \bigvee \underline{w},
\end{align}
where $\underline{x}_i \in \{0, 1\}^M$ is the $i^{\text{th}}$ column of $\matX$, 
$\underline{w} \in \{0, 1\}^M$ is the additive noise with the $i^{\text{th}}$ component
$\underline{w}(i) \sim \mathcal{B}(q)$.
Note that, for the noiseless case, $u=0, q=0$. 
Given the test output vector, $\underline{y}$, our goals are as follows:
\begin{enumerate}[(a)]
  \item To find computationally tractable algorithms to identify $L$ 
non-defective items, i.e., an $L$-sized subset belonging to 
$[N] \backslash \mathcal{G}$. 
 \item  To analyze the performance of the proposed algorithms
   with the objective of (i) finding the number of tests and 
   (ii) choosing the appropriate design parameters that leads 
   to non-defective subset recovery with high probability of success.
\end{enumerate}

\begin{figure}[t]
\centering
\includegraphics[scale=0.8]{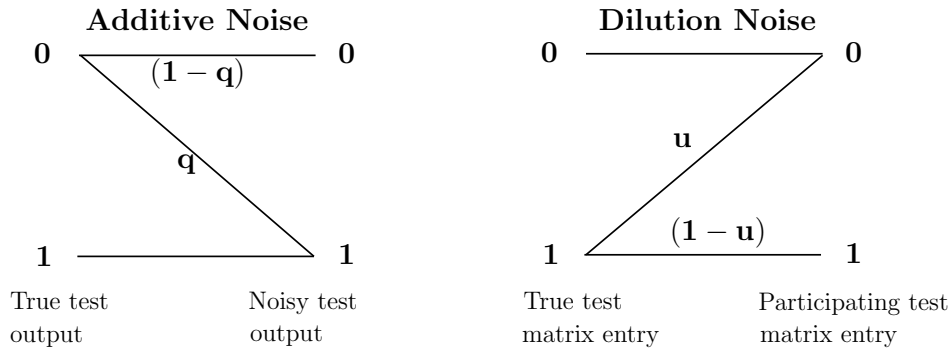}
\caption{Impact of different types of noise on the group testing signal model.}
\label{figure:noise_model}
\end{figure}

In the literature on defective set recovery
in group testing or on sparse vector recovery in compressed sensing,  
there exist two type of recovery results:
(a) \emph{Non-uniform/Per-Instance recovery results}:  
These state that a randomly chosen test matrix leads to non-defective subset
recovery with high probability of success for a given fixed defective set and,
(b) \emph{Uniform/Universal recovery results}: 
These state that a random draw of the test matrix leads to a successful non-defective subset
recovery with high probability for all possible defective sets.
It is possible to easily extend non-uniform results to the
uniform case using union bounds. Hence, we focus mainly on non-uniform recovery
results, and demonstrate the extension to the uniform case for
one of the proposed algorithms (see Corollary~\ref{cor_A1}).
%
Note that the non-uniform scenario is equivalent to the uniform recovery scenario
when the defective set is chosen uniformly at random from the set of ${N \choose K}$ 
possible choices. For the latter scenario,
information theoretic lower bounds on the number of tests  
for the non-defective subset recovery problem were derived in \cite{AC13} using Fano's inequality. We use these
bounds in assessing the performance of the proposed algorithms (see Section~\ref{sec_alg_disc}).
For the ease of reference, we summarize these results in Table~\ref{tab:tab_necc_order_1}.

For later use, we summarize some key facts pertaining to the above signal 
model in the lemma below. 
For any $l \in [M]$ and $k \in [N]$, let $X_{lk}$
denote the $(l,k)^{\text{th}}$ entry of the test matrix $\matX$
and let $Y_l \triangleq \ul{y}(l)$ denote the $l^{\text{th}}$ test output.
With $u$, $q$ and $p$ as defined above,  
let $\Gamma \triangleq (1-q) \lb 1 - (1-u)p \rb^{K}$ and 
$\gamma_0 \triangleq \frac{u}{(1 - (1-u)p)}$.
Then it follows that,
\begin{lemma} \label{sig_mod_facts}
  \begin{enumerate}[(a)]
    \item $\mb{P}( Y_l = 0) = \Gamma$.
    \item For any $j \notin S_d$, $\mb{P}(Y_l | X_{lj})  = \mb{P}(Y_l)$.
    \item For any $i \in S_d$, $\mb{P}( Y_l = 0 | X_{li} = 1) = \gamma_0 \Gamma$ and 
      $\mb{P}( Y_l = 0 | X_{li} = 0) = \frac{\Gamma}{1 - (1-u)p}$.
     Further, using Bayes rule, $\mb{P}( X_{li} = 1 | Y_l = 0 ) = p \gamma_0$.
    \item Given $Y_l$, $X_{li}$ is independent of $X_{lj}$ for
      any $i \in S_d$ and $j \notin S_d$.
  \end{enumerate}
\end{lemma}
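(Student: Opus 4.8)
The plan is to exploit the boolean-OR structure of \eqref{eq:gtmodel} together with the mutual independence of all the underlying Bernoulli variables. For a fixed row $l$, I would first isolate the \emph{effective contribution} of each defective item $i \in S_d$ to the test, namely $b_{li} \triangleq \ul{d}_i(l) X_{li}$, a product of two independent Bernoulli variables and hence itself Bernoulli with $\mb{P}(b_{li} = 1) = (1-u)p$. Since the entries $X_{li}$, the dilution bits $\ul{d}_i(l)$, and the additive noise $\ul{w}(l)$ are all independent, the model collapses to $Y_l = \big(\bigvee_{i \in S_d} b_{li}\big) \bigvee \ul{w}(l)$, so that the event $\{Y_l = 0\}$ is exactly the intersection of the mutually independent events $\{\ul{w}(l) = 0\}$ and $\{b_{li} = 0\}$ over $i \in S_d$.

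For part (a) I would simply multiply these probabilities, obtaining $\mb{P}(Y_l = 0) = (1-q)\prod_{i \in S_d}(1 - (1-u)p) = (1-q)(1-(1-u)p)^{K} = \Gamma$. Part (b) follows immediately upon noting that $Y_l$ is a function only of $\{X_{li}, \ul{d}_i(l) : i \in S_d\}$ and $\ul{w}(l)$, none of which involves a column index $j \notin S_d$; since the columns of $\matX$ are independent, $X_{lj}$ is independent of $Y_l$. For part (c) I would condition on the value of $X_{li}$ for a single defective $i$: when $X_{li} = 1$ the factor $\{b_{li} = 0\}$ reduces to $\{\ul{d}_i(l) = 0\}$, of probability $u$, whereas when $X_{li} = 0$ it holds with probability $1$; in both cases the remaining $K-1$ defective factors and the noise factor are unaffected, yielding $\gamma_0 \Gamma$ and $\Gamma/(1-(1-u)p)$ respectively, after recognizing that $(1-q)(1-(1-u)p)^{K-1} = \Gamma/(1-(1-u)p)$. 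The Bayes-rule claim then drops out by combining with $\mb{P}(X_{li}=1)=p$ and part (a).

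The step requiring the most care is part (d), the conditional independence of $X_{li}$ and $X_{lj}$ given $Y_l$, for $i \in S_d$ and $j \notin S_d$. The crucial observation is that $Y_l$ is measurable with respect to the randomness of the defective columns and the noise alone, so the pair $(X_{li}, Y_l)$ is jointly independent of $X_{lj}$. I would therefore factor the joint probability as $\mb{P}(X_{li}=a, X_{lj}=b, Y_l=y) = \mb{P}(X_{lj}=b)\,\mb{P}(X_{li}=a, Y_l=y)$, divide through by $\mb{P}(Y_l=y)$, and invoke part (b) to rewrite $\mb{P}(X_{lj}=b)$ as $\mb{P}(X_{lj}=b \mid Y_l=y)$, which delivers the desired factorization $\mb{P}(X_{li}=a, X_{lj}=b \mid Y_l=y) = \mb{P}(X_{li}=a \mid Y_l=y)\,\mb{P}(X_{lj}=b \mid Y_l=y)$. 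The only subtlety is to verify cleanly that $X_{lj}$ genuinely does not enter $Y_l$, which is precisely where the hypothesis $j \notin S_d$ is used.
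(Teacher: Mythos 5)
Your proposal is correct and takes essentially the same approach as the paper: both proofs reduce the event $\{Y_l = 0\}$ to the intersection of the independent events that each defective item's effective (diluted) contribution is zero and that the additive noise is zero, and then read off (a)--(c) by direct multiplication and conditioning. For (d), your joint-law factorization $\mb{P}(X_{li}=a, X_{lj}=b, Y_l=y) = \mb{P}(X_{lj}=b)\,\mb{P}(X_{li}=a, Y_l=y)$ is just a rephrasing of the paper's identity $\mb{P}(Y_l \mid X_{li}, X_{lj}) = \mb{P}(Y_l \mid X_{li})$ combined with Bayes' rule and part (b), so no substantive difference remains.
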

The proof is provided in Appendix \ref{prf_sig_mod_facts}. 

\section{Algorithms and Main Results} \label{result_alg}
We now present several algorithms for non-defective/healthy subset recovery.
Each algorithm takes the observed noisy test-output vector $\ul{y} \in \{0,1\}^M$ and 
the test matrix $\matX \in \{0,1\}^{M \times N}$ as inputs, and 
outputs a set of $L$ items, $\hat{S}_L$, that have been declared non-defective.
The recovery is successful if the
declared set does not contain any defective item, i.e., $\hat{S}_L \cap S_d = \{ \emptyset \}$. 
For each algorithm, we derive expressions for the upper bounds on the 
average probability of error, which are 
further used in deriving the number of tests required for successful 
non-defective subset recovery. 

\subsection{Row Based Algorithm} \label{row_alg}
Our first algorithm to find non-defective items is also the simplest and the 
most intuitive one.
We make use of the basic fact of group testing that, in the noiseless case, if the 
test outcome is negative, then all the items being tested are non-defective.

\begin{framed}
\noindent {\bf{RoAl}} (Row based algorithm): 
\begin{itemize}
  \item Compute $\ul{z} =  \sum_{j \in \text{supp}(\ul{y}^c)} \ul{x}_j^{(r)}$, 
    where $\ul{x}_j^{(r)}$ is the $j^{\text{th}}$ row of the test matrix.
  \item Order entries of $\ul{z}$ in descending order.
  \item Declare the items indexed by the top $L$ entries as the non-defective subset.
  \end{itemize}
\end{framed}
That is, declare the $L$ items that have been tested most number of times in 
pools with negative outcomes as non-defective items.
The above decoding algorithm proceeds by only considering the tests with 
negative outcomes. 
Note that, when the test outcomes are noisy, there is a nonzero probability
of declaring a defective item as non-defective. In particular, the dilution
noise can lead to a test containing defective items in the pool being declared 
negative, leading to a possible misclassification of the defective items. 
On the other hand, since the algorithm only considers tests with negative outcomes, 
additive noise does not lead to misclassification of defective items 
as non-defective.  However, the additive noise does lead to an increased number of tests as
the algorithm has to possibly discard many of the pools that contain only
non-defective items.

We note that existing row based algorithms for finding defective set \cite{du2006, jaggi_gtalgo}
can be obtained as a special case of the above algorithm by setting $L=N-K$, i.e., by looking
for all non-defective items.
However, the analysis in the past work does not quantify the impact of the parameter $L$ and that
is our main goal here. We characterize the number of tests, $M$, that are 
required to find $L$ non-defective items with high probability of success using
{\bf RoAl} in Theorem~\ref{thm_A1A2}.

\subsection{Column Based Algorithm} \label{col_alg}
The column based algorithm is based on matching the columns of the test matrix with 
the test outcome vector. A non-defective item does not impact the output and hence
the corresponding column in the test matrix should be ``uncorrelated'' with the output. 
On the other hand, ``most'' of the pools that test a defective item should 
test positive. This forms the basis of distinguishing a defective item from 
a non-defective one. The specific algorithm is as follows:

\begin{framed}
\noindent {\bf{CoAl}} (Column based algorithm): 
Let $\psi_{cb} \ge 0$ be any constant. 
\begin{itemize}
  \item For each $i=1, \ldots, N$, compute
    \begin{align} \label{eq:greedy_ts}
      \mc{T}(i) = \underline{x}_i^T \ul{y}^c - \psi_{cb} (\underline{x}_i^T \underline{y}),
    \end{align}
    where 
    $\ul{x}_i$ is the $i^{\text{th}}$ column of $\matX$.
  \item Sort $\mc{T}(i)$ in descending order.
  \item Declare the items indexed by the top $L$ entries as the non-defective subset.
\end{itemize}
\end{framed}
We note that, in contrast to the row based algorithm, {\bf CoAl} works with pools of 
both the negative and positive test outcomes (when the parameter $\psi_{cb} > 0$; its choice is explained below). 
For both {\bf RoAl} and {\bf CoAl}, by analyzing the probability of error, we can derive
the sufficient number of tests required to achieve arbitrarily small error rates. 
We summarize the main result in the following theorem:
\begin{theorem} \label{thm_A1A2}
  (Non-Uniform recovery with {\bf {RoAl}} and {\bf CoAl}) 
  Let $\Gamma \triangleq (1-q) \lb 1 - (1-u)p \rb^{K}$ and 
  $\gamma_0 \triangleq \frac{u}{(1 - (1-u)p)}$.
  Suppose $K > 1$ and 
  let $p$ be chosen as $\frac{\alpha}{K}$ with $\alpha = \frac{1}{(1-u)}$.
  For {\bf RoAl}, let $\psi_0 \triangleq 0$.
  For {\bf CoAl}, choose $\psi_0 \triangleq \frac{\gamma_0\Gamma}{1-\gamma_0 \Gamma}$ and
  set $\psi_{cb} = \psi_0$.
  Let $c_0 > 0$ be any constant.
  Then, there exist absolute constants $C_{a1}, C_{a2} > 0$ independent of $N$, $L$ and $K$, and
  different for each algorithm, such that, if the number of tests is chosen as
  \begin{align} \label{eq:M_cond_noisy_rba1}
    M \ge  (1 + c_0) \frac{ K (1-u) }{(1-q)(1-\gamma_0)^2 (1 + \psi_0)} 
    \lb \frac{C_{a1} \log \left [K {N - K \choose L - 1} \right] }{(N-K) - (L-1)} + C_{a2} \log K \rb,
  \end{align}
  then, for a given defective set, the algorithms {\bf{RoAl}} and {\bf CoAl} find $L$ non-defective items 
  with probability exceeding $1-\exp\left(-c_0 \log \lb K {N - K \choose L - 1} \rb \right)$ $- \exp(-c_0 \log K)$.
\end{theorem}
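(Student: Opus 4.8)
The plan is to reduce the recovery guarantee to a statement about the separation of the decision statistic between defective and non-defective items, and then control the two resulting failure modes by concentration inequalities together with union bounds. I treat both algorithms at once by writing the statistic as $\mc{T}(i) = \ul{x}_i^T\ul{y}^c - \psi_{cb}\,\ul{x}_i^T\ul{y}$ with $\psi_{cb}=\psi_0$, noting that {\bf RoAl} is the special case $\psi_0=0$. Fixing a threshold $\tau$ (to be chosen strictly between the two class means), recovery succeeds whenever every defective item has $\mc{T}(i)<\tau$ while at least $L$ non-defective items have $\mc{T}(j)\ge\tau$; in that event each defective is outranked by at least $L$ non-defectives and so cannot enter the top $L$. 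Hence the error probability is at most $\mb{P}(\mathcal{E}_1)+\mb{P}(\mathcal{E}_2)$, with $\mathcal{E}_1=\{\exists\, i\in S_d:\mc{T}(i)\ge\tau\}$ and $\mathcal{E}_2=\{\text{fewer than }L\text{ non-defectives satisfy }\mc{T}(j)\ge\tau\}$.

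First I would compute the class means via Lemma~\ref{sig_mod_facts}. For a non-defective $j$, part (b) makes $X_{lj}$ independent of $Y_l$, so each pool contributes $p\Gamma$ to $\mb{E}[\ul{x}_j^T\ul{y}^c]$ and $p(1-\Gamma)$ to $\mb{E}[\ul{x}_j^T\ul{y}]$; for a defective $i$, part (c) gives $\mb{P}(X_{li}=1,Y_l=0)=p\gamma_0\Gamma$. This yields defective mean $Mp[\gamma_0\Gamma-\psi_0(1-\gamma_0\Gamma)]$ and non-defective mean $Mp[\Gamma-\psi_0(1-\Gamma)]$, hence a separation $Mp\Gamma(1-\gamma_0)(1+\psi_0)$. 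The choice $\psi_0=\gamma_0\Gamma/(1-\gamma_0\Gamma)$ is exactly the value making the defective mean zero, so $\tau$ can be taken as a fixed fraction of the positive separation above $0$; for {\bf RoAl} the means are $Mp\gamma_0\Gamma$ and $Mp\Gamma$, and I would place $\tau$ at their midpoint, so the relative deviations from the non-defective and defective means are $\tfrac{1-\gamma_0}{2}$ and $\tfrac{1-\gamma_0}{2\gamma_0}$ — the origin of the $(1-\gamma_0)^2$ factor after squaring in the Chernoff exponent.

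For $\mathcal{E}_1$, for a fixed defective $i$ the per-pool contributions to $\mc{T}(i)$ are independent across the $M$ rows and bounded (in $\{0,1\}$ for {\bf RoAl}, in $\{-\psi_0,0,1\}$ for {\bf CoAl}), so a Chernoff/Bernstein upper-tail bound gives $\mb{P}(\mc{T}(i)\ge\tau)\le\exp(-c\,M/\Lambda)$ with $\Lambda=\frac{K(1-u)}{(1-q)(1-\gamma_0)^2(1+\psi_0)}$ the leading factor of the claimed bound on $M$; a union bound over the $K$ defectives then produces the $C_{a2}\log K$ term and the $\exp(-c_0\log K)$ contribution. The delicate part is $\mathcal{E}_2$, because the statistics $\{\mc{T}(j)\}_{j\notin S_d}$ are \emph{not} independent — they share the common output $\ul{y}$. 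The key observation is that $\ul{y}$ depends only on the defective columns, their dilution variables, and the additive noise, all independent of the non-defective columns; hence, conditioned on $\ul{y}$, the columns $\{\ul{x}_j\}_{j\notin S_d}$ remain i.i.d.\ $\mathcal{B}(p)$ and independent across $j$. Writing $m_0=|\mathrm{supp}(\ul{y}^c)|$, conditionally $\ul{x}_j^T\ul{y}^c\sim\mathrm{Binomial}(m_0,p)$ and the non-defective statistics become conditionally i.i.d. I would then bound $\mb{P}(\mathcal{E}_2\mid\ul{y})$ by a union bound over the $\binom{N-K}{L-1}$ ways to select $(N-K)-(L-1)$ non-defectives falling below $\tau$, giving $\binom{N-K}{L-1}\rho^{(N-K)-(L-1)}$ with $\rho=\mb{P}(\mc{T}(j)<\tau\mid\ul{y})$ controlled by a lower-tail Chernoff bound, and finally remove the conditioning using the concentration of $m_0\sim\mathrm{Binomial}(M,\Gamma)$ around $M\Gamma$ (an extra lower-tail term absorbed into the constants). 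Taking logarithms, the requirement $((N-K)-(L-1))\log(1/\rho)\gtrsim(1+c_0)\log[K\binom{N-K}{L-1}]$ is precisely what yields the $C_{a1}\log[K\binom{N-K}{L-1}]/((N-K)-(L-1))$ term and the $\exp(-c_0\log(K\binom{N-K}{L-1}))$ contribution.

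The main obstacle I anticipate is the non-defective side: establishing the conditional-independence reduction cleanly and then tracking how the lower-tail exponent, the subset union-bound coefficient $\binom{N-K}{L-1}$, and the fluctuation of $m_0$ combine to produce the stated dependence on $(N-K)-(L-1)$, $(1-\gamma_0)^2$, and $(1+\psi_0)$. Matching the leading constant $\Lambda$ requires substituting $p=\frac{1}{(1-u)K}$, so that $(1-u)p=1/K$ and $\Gamma=(1-q)(1-1/K)^K$, and then verifying that both Chernoff exponents are simultaneously of order $M/\Lambda$ under the single choice of $\tau$. The bookkeeping is heaviest for {\bf CoAl}, whose statistic is a signed bounded sum rather than a binomial: there I would expect to need a Bernstein/Bennett-type bound (rather than the plain binomial Chernoff used for {\bf RoAl}) to keep the variance proxy proportional to $(1+\psi_0)$, which is what prevents the separation gain $(1+\psi_0)$ from being cancelled by an inflated range and thereby preserves the $(1+\psi_0)$ in the denominator of $\Lambda$.
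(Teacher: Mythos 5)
Your proposal is correct and, in its skeleton, is the same argument the paper gives in Section~\ref{sec_prf_thm_A2}: the unified statistic $\mc{T}(i)=\ul{x}_i^T\ul{y}^c-\psi_{cb}\,\ul{x}_i^T\ul{y}$ with {\bf RoAl} as the case $\psi_{cb}=0$, the means and separation $Mp\Gamma(1-\gamma_0)(1+\psi_0)$ obtained from Lemma~\ref{sig_mod_facts}, a threshold placed between the class means, Bernstein bounds for the two tails (the paper likewise needs Bernstein, not a plain binomial Chernoff, to keep the variance proxy of order $Mp\Gamma(1+\psi_0)$ for {\bf CoAl}), a union bound of the form $K{N-K \choose L-1}P_{eh}^{N_0}+KP_{ed}$ with $N_0=(N-K)-(L-1)$, and finally the substitution $p=\frac{1}{(1-u)K}$ with $(1-q)e^{-2}\le\Gamma\le(1-q)e^{-1}$. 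The one place where you genuinely diverge is the treatment of the dependence among $\{\mc{T}(j)\}_{j\notin S_d}$: the paper's Proposition~\ref{prop_main_pe_ub} observes that, for a given $\ul{y}$, each $\mc{T}(j)$ is a function of $\ul{x}_j$ alone and then factorizes the product using the \emph{unconditional} tail probability $P_{eh}$, whereas you condition explicitly on $\ul{y}$ (under which the non-defective statistics are i.i.d., with $\ul{x}_j^T\ul{y}^c\sim\mathrm{Bin}(m_0,p)$) and then decondition via concentration of $m_0\sim\mathrm{Bin}(M,\Gamma)$ about $M\Gamma$. Your version is in fact the more watertight execution of this step: unconditionally the $\mc{T}(j)$'s are positively correlated through $m_0$, and since the conditional tail is a common function $f(\ul{y})$, Jensen gives $\mb{E}[f(\ul{y})^{N_0}]\ge(\mb{E}[f(\ul{y})])^{N_0}$, so the direct factorization is delicate as stated, while your route yields the same exponent at the cost of one extra additive term of order $\exp(-cM\Gamma)$, which is dominated by the other terms and absorbable into $C_{a1},C_{a2}$ as you say. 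Two further minor, harmless differences: your threshold decomposition into $\mathcal{E}_1$ (some defective above $\tau$) and $\mathcal{E}_2$ (fewer than $L$ non-defectives above $\tau$) decouples the two classes and drops the factor $K$ from the first union-bound term, and you place the non-defective deviation at half the gap where the paper uses $\epsilon_0=(\mu_j-\mu_i)/4$ after its $\mc{H}_i$-based set manipulation; both affect only constants. Your observation that the paper's choice $\psi_0=\frac{\gamma_0\Gamma}{1-\gamma_0\Gamma}$ is exactly the value that zeroes the defective mean is consistent with, and a nice interpretation of, the paper's parameter choice.
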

The following corollary extends Theorem~\ref{thm_A1A2} to uniform recovery of a 
non-defective subset using {\bf RoAl} and {\bf CoAl}.
\begin{corollary} \label{cor_A1}
  (Uniform recovery with {\bf RoAl} and {\bf CoAl}) 
  For any positive constant $c_0 > 0$,  there exist absolute constants $C_{a1}, C_{a2} > 0$ independent 
  of $N$, $L$ and $K$, and different for each algorithm, such that if the number of tests is chosen as
  \begin{align} \label{eq:M_cond_noisy_rba2}
    M \ge  (1 + c_0) \frac{ K (1-u) }{(1-q)(1-\gamma_0)^2 (1 + \psi_0)} 
    \lb \frac{C_{a1}\log \left [K {N - K \choose L - 1} {N \choose K} \right] }{(N-K) - (L-1)} + C_{a2} \log N \rb,
  \end{align}
  then for {\bf any} defective set,
the algorithms {\bf{RoAl}} and {\bf CoAl} find $L$ non-defective items with probability exceeding 
  $1-\exp\left(-c_0 \log \lb K {N - K \choose L - 1} \rb \right)$ $- \exp(-c_0 \log N)$.
\end{corollary}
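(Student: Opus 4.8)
The plan is to lift the per-instance guarantee of Theorem~\ref{thm_A1A2} to one holding simultaneously for all ${N \choose K}$ defective sets by a union bound, but applied separately to the two error events that the per-instance argument isolates. Both algorithms declare the $L$ highest-scoring items, so for a fixed defective set $\mathcal{G}$ recovery succeeds as soon as there is a deterministic threshold $\tau$ for which (i) every defective item has score below $\tau$ and (ii) at least $L$ non-defective items have score $\ge\tau$. Writing $E_1(\mathcal{G})$ for the failure of (i) and $E_2(\mathcal{G})$ for the failure of (ii), I would bound $\mb{P}(\exists\,\mathcal{G}:\text{failure}) \le \mb{P}(\exists\,\mathcal{G}: E_1(\mathcal{G})) + \mb{P}(\exists\,\mathcal{G}: E_2(\mathcal{G}))$ and control the two terms by different devices; this is what forces the asymmetric modifications in \eqref{eq:M_cond_noisy_rba2}, namely the extra factor ${N \choose K}$ inside the first logarithm and the replacement of $\log K$ by $\log N$ in the second term. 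Note that, by exchangeability of the i.i.d.\ Bernoulli design, the marginal score distributions depend only on an item's defective/non-defective status and on $K$, so a single $\tau$ serves all $\mathcal{G}$.

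For the combinatorial event $E_2$, the non-defective scores depend on the realized output $\ul{y}$ and hence on $\mathcal{G}$, so there is no way to avoid charging for every defective set, and I would simply union-bound over all ${N \choose K}$ of them. The per-instance bound on $\mb{P}(E_2(\mathcal{G}))$ can be taken of the form ${N-K \choose L-1}\,\rho^{(N-K)-(L-1)}$, with $\rho$ a single-item lower-deviation probability decaying geometrically in $M/K$; enlarging the logarithmic target in \eqref{eq:M_cond_noisy_rba2} from $K{N-K \choose L-1}$ to $K{N-K \choose L-1}{N \choose K}$ drives $\rho^{(N-K)-(L-1)}$ down by the factor $\left[{N \choose K}\right]^{-(1+c_0)}$. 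Multiplying by the union factor ${N \choose K}$ and using ${N \choose K}^{-c_0}\le 1$ then collapses this contribution back to $\exp(-c_0\log(K{N-K \choose L-1}))$, i.e.\ the first error term survives verbatim.

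The crux, and the step I expect to be the main obstacle, is the defective-deviation event $E_1$. A naive union over all ${N \choose K}$ defective sets injects an additive $\log {N \choose K}$, of order $K\log N$, into the exponent, which the single $\log N$ budget in the second term of \eqref{eq:M_cond_noisy_rba2} cannot absorb; doing so would demand $M=\Omega(K^2\log N)$ rather than $O(K\log N)$. The way out is a sample-path domination that removes the dependence on $\mathcal{G}$ and replaces the set-union by a union over the $N$ items. The key observation, which refines Lemma~\ref{sig_mod_facts}(c) into a pathwise statement, is that whenever $X_{li}=1$ for a defective item $i$ and the pool is negative ($Y_l=0$), item $i$ must have been diluted out of pool $l$, since otherwise the contribution $\mathbf{D}_i\ul{x}_i$ in \eqref{eq:gtmodel} would force $Y_l=1$. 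Hence, no matter which set $\mathcal{G}\ni i$ is defective, the positive part of the score obeys $\ul{x}_i^T\ul{y}^c \le \#\{l: X_{li}=1,\ \ul{d}_i(l)=0\}$, and since the subtracted term $-\psi_{cb}(\ul{x}_i^T\ul{y})$ is nonpositive, the whole score $\mc{T}(i)$ (and the {\bf RoAl} score $\ul{z}(i)$) is dominated by the $\mathcal{G}$-independent statistic $\tilde{z}(i)\triangleq \#\{l: X_{li}=1,\ \ul{d}_i(l)=0\}$, a sum of $M$ i.i.d.\ $\mathcal{B}(pu)$ variables. Therefore $\{\exists\,\mathcal{G}\ni i: \mc{T}(i)\ge\tau\}\subseteq\{\tilde{z}(i)\ge\tau\}$, so $\mb{P}(\exists\,\mathcal{G}: E_1(\mathcal{G}))\le \sum_{i=1}^{N}\mb{P}(\tilde{z}(i)\ge\tau)$, a union over only $N$ terms.

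Finishing, I would apply a Chernoff bound to $\tilde{z}(i)$: choosing $M$ as in \eqref{eq:M_cond_noisy_rba2} with $\log N$ in place of $\log K$ makes each $\mb{P}(\tilde{z}(i)\ge\tau)$ at most $N^{-(1+c_0)}$, whence $N\cdot\mb{P}(\tilde{z}(i)\ge\tau)\le\exp(-c_0\log N)$; adding the two contributions yields the claimed success probability after re-choosing the absolute constants $C_{a1},C_{a2}$. The delicate quantitative point concealed here, and the reason the factor $(1-\gamma_0)^{-2}$ and the design $p=\alpha/K$ with $\alpha=\tfrac{1}{1-u}$ enter, is that this Chernoff estimate decays only if $\tau$ sits above the mean $Mpu$ of the dominating statistic $\tilde{z}(i)$ while still lying below the typical non-defective score (so that part (ii) holds for $L$ as large as $N-K$); verifying that the chosen $p$ and $\tau$ keep these two requirements compatible — that the dilution-induced over-count in $\tilde{z}(i)$ does not swamp the non-defective signal — is the technical heart of the extension.
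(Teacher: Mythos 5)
Your proposal matches the paper's proof in its overall architecture. The paper likewise splits the uniform error event into two pieces treated asymmetrically: the combinatorial term is union-bounded over all ${N \choose K}$ defective sets (producing the extra ${N \choose K}$ inside the first logarithm of (\ref{eq:M_cond_noisy_rba2})), while the defective-deviation term $\cup_{S_d}\cup_{i \in S_d}\overline{\mc{H}_i}$ is collapsed to $\cup_{i \in [N]}\overline{\mc{H}_i}$, so that the multiplicative factor $K$ becomes $N$ and $\log K$ becomes $\log N$, with no combinatorial factor. Where you differ is in how that collapse is justified: the paper asserts the inclusion in one line, implicitly treating $\overline{\mc{H}_i} = \{\mc{T}(i) > \tau\}$ as an event not depending on $S_d$ and reusing the per-instance bound $P_{ed}$, even though $\mc{T}(i)$ is computed from $\ul{y}$ and hence changes with the hypothesized defective set. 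Your pathwise domination $\mc{T}_{S_d}(i) \le \tilde{z}(i) = \#\{l : X_{li}=1,\ \ul{d}_i(l)=0\}$, valid simultaneously for every $S_d \ni i$, is precisely the mechanism needed to make that step rigorous, and in this respect your writeup is more careful than the paper's.

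However, the final Chernoff step does not close over the full noise range, and the caveat you flag at the end is a genuine gap rather than a formality. Replacing $\mc{T}(i)$ by $\tilde{z}(i)$ inflates the relevant mean from $\mu_i = Mp\gamma_0\Gamma \approx Mpu(1-q)e^{-1}$ to $\mb{E}[\tilde{z}(i)] = Mpu$, while the threshold remains $\tau = (\mu_i + \mu_j)/2 \approx Mp\,\Gamma(1+\gamma_0)/2$ with $\Gamma \approx (1-q)e^{-1}$. The tail $\mb{P}(\tilde{z}(i) > \tau)$ decays only if $Mpu < \tau$, i.e., roughly $u/(1+u) < (1-q)/(2e)$; taking, say, $u = 0.3$ and $q = 0.1$ — well inside the model's allowed range $u, q \in (0, 0.5)$ — one gets $Mpu > \tau$, so $\mb{P}(\tilde{z}(i) > \tau) \rightarrow 1$ and no re-choice of the absolute constants $C_{a1}, C_{a2}$ can rescue the bound. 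Nor does raising $\tau$ toward $\mu_j$ repair it in general, since $Mpu$ exceeds even $\mu_j = Mp\Gamma$ once $u > (1-q)(1-1/K)^K$. So, as written, your route establishes the corollary only in a restricted noise regime. It is worth noting that the paper's own one-line inclusion silently carries the same burden: to bound the second term by $N P_{ed}$ with the per-instance $P_{ed}$, one must exhibit an $S_d$-independent dominating event, and yours is the natural candidate — so your proposal has exposed, rather than created, this difficulty; within the parameter regime where the compatibility condition $Mpu < \tau$ holds, your argument is a complete and arguably more rigorous proof than the paper's sketch.
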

The proof of the above theorem is presented in Section~\ref{sec_prf_thm_A2}.
It is tempting to compare the performance of {\bf RoAl} and {\bf CoAl} by comparing
the required number of tests as presented in
(\ref{eq:M_cond_noisy_rba1}).
However, such comparisons must be done keeping in mind that the
required number of observations in (\ref{eq:M_cond_noisy_rba1}) are
based on an upper bound on the average probability of error. 
The main objective of these results is to provide a guarantee on the number of 
tests required for non-defective subset recovery and
highlight the order-wise dependence of the number of tests on the system parameters.
For the comparison of the relative performance of the algorithms, we refer the
reader to Section~\ref{sec:simulations}, where we present numerical results obtained
from simulations. From the simulations, we observe that {\bf CoAl} performs better than
{\bf RoAl} for most scenarios of interest. This is because, in contrast to {\bf RoAl}, {\bf CoAl} uses the information obtained from pools corresponding to both negative and positive test outcomes.


\subsection{Linear program relaxation based algorithms} \label{lin_alg}
In this section, we
consider linear program (LP) relaxations to the non-defective subset 
recovery problem and identify the conditions under which such LP relaxations lead
to recovery of a non-defective subset with high probability of success.
These algorithms are inspired by analogous algorithms studied in the context
of defective set recovery in the literature~\cite{mtov_mtov_2012,jaggi_gtalgo}. 
However, 
past analysis on the number of tests
for the defective set recovery do not carry over to the non-defective subset recovery 
because the goals of the algorithms are very different.
Let $Y_z \triangleq \{l \in [M]: \uy(l) = 0 \}$, i.e., $Y_z$ is
the index set of all the pools whose test outcomes are negative and $M_z \triangleq |Y_z|$. 
Similarly, let $Y_p \triangleq \{l \in [M]: \uy(l) = 1 \}$ and $M_p \triangleq |Y_p|$. 
Define the following linear program, with optimization variables 
$\ul{z} \in \mathbb{R}^{N}$ and $\ul{\eta}_z \in \mathbb{R}^{M_z}$:
\begin{align} \label{eq:row_lp_0}
  &  & \mathop{\text{minimize}}_{\ul{z}, \ul{\eta}_z} ~ ~ & ~ ~  \ul{1}_{M_z}^T \ul{\eta}_z \\
  \label{eq:row_lp_0_1}
  &\text{\bf (LP0)}  ~ ~ & \text{subject to} ~ ~ & ~ ~ 
  \matX(Y_z,:) (\ul{1}_{N} - \ul{z} ) - \ul{\eta}_z = \ul{0}_{M_z}, \\
  \nonumber
  & & & ~ ~ \ul{0}_{N} \preccurlyeq \ul{z} \preccurlyeq \ul{1}_{N} ,  ~ ~\ul{\eta}_z \succcurlyeq \ul{0}_{M_z}, \\
  \nonumber
  & & & ~ ~ \ul{1}_{N}^T \ul{z} \le L.
\end{align}

Consider the following algorithm:\footnote{The other algorithms 
presented in this sub-section, namely, {\bf RoLpAl++} and {\bf CoLpAl}, 
  have the same structure and differ only in the linear program being
  solved.}
\begin{framed}
\noindent{\bf RoLpAl} (LP relaxation with negative outcome pools only) 
\begin{itemize}
  \item Setup and solve {\bf LP0}. Let $\hat{\ul{z}}$ be the solution of {\bf LP0}.
  \item Sort $\hat{\ul{z}}$ in descending order.
  \item Declare the items indexed by the top $L$ entries as the non-defective subset.
\end{itemize}
\end{framed}

The above program relaxes the combinatorial problem of choosing $L$ out of $N$ items
by allowing the boolean
variables to acquire ``real'' values between $0$ and $1$ as long as 
the constraints imposed by negative pools, specified in (\ref{eq:row_lp_0_1}), are met.
Intuitively, the variable $\ul{z}$ (or the variable $[\ul{1}_N - \ul{z}])$ can be thought of as the 
confidence with which an item is being declared as non-defective (or defective). 
The constraint $\ul{1}_{N}^T \ul{z} \le L$ forces 
the program to assign high values (close to $1$) for ``approximately'' the top $L$ entries only,
which are then declared as non-defective.

For the purpose of analysis, we first derive sufficient conditions for correct non-defective subset 
recovery with {\bf RoLpAl} in terms of the dual variables of {\bf LP0}. We then derive the number
of tests required to satisfy these sufficiency conditions with high probability.
The following theorem summarizes the performance of the above algorithm:
\begin{theorem} \label{thm_A3}
  (Non-Uniform recovery with {\bf RoLpAl}) 
  Let $K > 1$ and let
  $p$ be chosen as $\frac{\alpha}{K}$ with $\alpha = \frac{1}{(1-u)}$.
  If the number of tests is chosen as in (\ref{eq:M_cond_noisy_rba1}) with $\psi_0 = 0$, 
  then for a given defective set there exist absolute constants $C_{a1}, C_{a2} > 0$ independent
  of $N$, $L$ and $K$, such that
  {\bf{RoLpAl}} finds $L$ non-defective items with probability 
  exceeding $1-\exp\left(-c_0 \log \lb K {N - K \choose L - 1}\rb \right)$ $- \exp(-c_0 \log K)$.
\end{theorem}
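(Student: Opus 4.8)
The plan is to follow the two-stage strategy announced just before the theorem: first reduce correct recovery by {\bf RoLpAl} to a purely deterministic separation condition phrased through the dual of {\bf LP0}, and then show that this condition holds with the stated probability once $M$ obeys (\ref{eq:M_cond_noisy_rba1}) with $\psi_0=0$.

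\emph{Stage 1 (dual characterization).} First I would form the Lagrangian dual of {\bf LP0}, attaching a multiplier vector $\ul{\mu}$ to the equality constraints in (\ref{eq:row_lp_0_1}), a scalar $\nu \ge 0$ to the budget constraint $\ul{1}_N^T\ul{z}\le L$, and box multipliers $\ul{\lambda}^{(1)},\ul{\lambda}^{(0)}$ to $\ul{0}_N\preccurlyeq\ul{z}\preccurlyeq\ul{1}_N$. Eliminating $\ul{\eta}_z$ through its stationarity condition forces $\ul{\mu}\preccurlyeq\ul{1}_{M_z}$, and the stationarity condition in $\ul{z}$ reads $\matX(Y_z,:)^T\ul{\mu} = \ul{\lambda}^{(1)}-\ul{\lambda}^{(0)}+\nu\ul{1}_N$. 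The observation that makes this tractable is that the objective $\ul{1}_{M_z}^T\ul{\eta}_z$ equals $\sum_j (1-z_j)c_j$, where $c_j \triangleq \sum_{l\in Y_z} X_{lj}$ is \emph{exactly} the {\bf RoAl} score (the number of negative pools containing item $j$); hence {\bf LP0} is a fractional knapsack whose optimum loads the budget onto the items with the largest $c_j$. Using this together with complementary slackness for the natural dual certificate $\ul{\mu}=\ul{1}_{M_z}$ (so that $(\matX(Y_z,:)^T\ul{\mu})_j=c_j$ is compared against the common threshold $\nu$), I would establish the deterministic sufficient condition: recovery succeeds whenever at least $L$ non-defective items have a strictly larger negative-pool count than every defective item, i.e. $\max_{i\in S_d}c_i$ lies strictly below the $L$-th largest value of $\{c_j\}_{j\notin S_d}$.

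\emph{Stage 2 (probabilistic analysis).} Here I would condition on the set $Y_z$ of negative pools. By Lemma~\ref{sig_mod_facts}(b) and (d), conditioned on $Y_z$ the scores are mutually independent, with $c_j\mid Y_z\sim\text{Bin}(M_z,p)$ for non-defective $j$ and, by Lemma~\ref{sig_mod_facts}(c), $c_i\mid Y_z\sim\text{Bin}(M_z,p\gamma_0)$ for defective $i$; since $\gamma_0<1$ this is precisely the mean separation $M_z p$ versus $M_z p\gamma_0$ that drives the argument. The failure event of Stage 1 is contained in the union, over the $K$ choices of a defective $i$ and the ${N-K\choose L-1}$ choices of an exceptional set of $L-1$ non-defectives, of the event that $c_i\ge c_j$ for all of the remaining $(N-K)-(L-1)$ non-defectives; for a fixed such configuration I would bound the conditional probability by a product of Chernoff/Bernstein tail bounds for $\text{Bin}(M_z,p)$ falling below the defective level, producing an exponent proportional to $M_z p\,(1-\gamma_0)^2$ per non-defective. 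A union bound then controls the failure probability by $\exp\lb \log\ls K{N-K\choose L-1}\rs - \Omega\lb M_z p (1-\gamma_0)^2((N-K)-(L-1)) \rb \rb$. Finally I would remove the conditioning via a high-probability lower bound $M_z\gtrsim M\Gamma$ (a Chernoff bound on $M_z\sim\text{Bin}(M,\Gamma)$, which is where the residual $\exp(-c_0\log K)$ term and the $C_{a2}\log K$ summand arise), and substitute $p=\frac{1}{K(1-u)}$ and $\Gamma=(1-q)(1-(1-u)p)^K$ with $(1-(1-u)p)^K=(1-1/K)^K=\Theta(1)$ absorbed into $C_{a1},C_{a2}$. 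Matching the exponents to the target probability reproduces (\ref{eq:M_cond_noisy_rba1}) with $\psi_0=0$.

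The main obstacle is Stage 1: converting the sorting-based decision rule into a cleanly \emph{provably sufficient} condition via the dual variables while correctly handling ties and the non-uniqueness of the {\bf LP0} optimizer (the constraint $\ul{1}_N^T\ul{z}\le L$ admits many fractional optima), so that ``the top-$L$ entries of $\hat{\ul{z}}$ are non-defective'' is genuinely implied by the count-separation event. Once that reduction is in place, Stage 2 is a standard, if bookkeeping-heavy, concentration-plus-union-bound computation, made routine by the conditional independence granted in Lemma~\ref{sig_mod_facts}; the only quantitative care needed is to preserve the gap factor $(1-\gamma_0)^2$ and the counting factor $(N-K)-(L-1)$ exactly as they appear in (\ref{eq:M_cond_noisy_rba1}).
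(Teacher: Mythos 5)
Your Stage 1 is essentially the paper's own argument: the paper also passes to an equivalent LP (with $\ul{z}$ replaced by $\ul{1}_N-\ul{z}$), writes the KKT conditions, proves that $\ul{\lambda}_2(i)>0$ for all $i\in S_d$ suffices for success, sandwiches the budget multiplier as $\theta_0\le\nu<\theta_1$, and uses the fact that at most $L$ indices can have $\ul{\lambda}_1(i)>0$ to contain the error event in $\mathop{\cup}_{i\in S_d}\mathop{\cup}_{S_z\in\mc{S}_z}\{c_i\ge c_j,\ \forall j\in S_z\}$ with $c_k=\ul{1}_{M_z}^T\matX_z(:,k)$ — exactly your count-separation condition, after which it observes $c_k=\mc{T}(k)|_{\psi_{cb}=0}$ and reuses the Theorem~\ref{thm_A1A2} machinery verbatim. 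Your fractional-knapsack reading of {\bf LP0} is a clean way to motivate this reduction, and your worry about ties is handled in the paper precisely by demanding strict positivity of $\ul{\lambda}_2(i)$ on the defectives.

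The genuine gap is in Stage 2. Your per-configuration bound $\exp\lb \log\ls K\binom{N-K}{L-1}\rs-\Omega\lb M_zp(1-\gamma_0)^2 N_0\rb\rb$, built from a product of $N_0$ non-defective tail bounds alone, is false for large $N_0$: the event $\{c_i\ge c_j\ \forall j\in S_z\}$ contains $\{c_i=M_z\}$, whose conditional probability $(p\gamma_0)^{M_z}$ does not decay in $N_0$ at all. One must split at a deterministic threshold $\tau$ between the two conditional means $M_zp\gamma_0$ and $M_zp$, giving $\mb{P}(c_i\ge c_j\ \forall j\in S_z)\le \mb{P}(c_i>\tau)+\prod_{j\in S_z}\mb{P}(c_j\le\tau)$; the first term, unioned over the $K$ defectives, is what generates both the $C_{a2}\log K$ summand in (\ref{eq:M_cond_noisy_rba1}) and the residual $\exp(-c_0\log K)$ in the success probability — in the paper this is the $KP_{ed}$ term of its error-decomposition proposition. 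Your attribution of that summand to the Chernoff bound on $M_z\sim\text{Bin}(M,\Gamma)$ is therefore a misdiagnosis: with $M$ as in (\ref{eq:M_cond_noisy_rba1}) that de-conditioning event fails with probability $\exp(-\Omega(M\Gamma))$, far below $\exp(-c_0\log K)$, and the paper avoids it entirely by applying Bernstein unconditionally to $\mc{T}(k)=\sum_{l=1}^{M}Z_{kl}$ with means $Mp\Gamma$ and $Mp\gamma_0\Gamma$. Once you restore the defective-deviation term, your conditional-binomial variant does go through (the conditional independence you invoke is justified by Lemma~\ref{sig_mod_facts}(b),(d) together with independence across tests) and recovers the theorem, at the minor cost of the extra $M_z$-concentration step the paper sidesteps.
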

The proof of the above theorem is presented in Section~\ref{sec_prf_a3}.
Note that {\bf LP0} operates only on the set of pools with negative outcomes and is, thus,
sensitive to the dilution noise which can lead to a misclassification
of a defective item as non-defective. To combat this, we can leverage the information available
from the pools with 
positive outcomes also, by incorporating constraints for variables involved in these tests.
Consider the following linear program with optimization variables
$\ul{z} \in \mathbb{R}^{N}$ and $\ul{\eta}_z \in \mathbb{R}^{M_z}$:
\begin{align} \label{eq:row_lp_1prime}
&   &\mathop{\text{minimize}}_{\ul{z}, \ul{\eta}_z} ~ ~ & ~ ~  \ul{1}_{M_z}^T \ul{\eta}_z \\
  \nonumber
 &  \text{\bf (LP1)} ~ ~ &\text{subject to} ~ ~ & ~ ~ 
  \matX(Y_z,:) (\ul{1}_{N} - \ul{z} ) - \ul{\eta}_z = \ul{0}_{M_z} \\
  \label{eq:lp1_p_1}
 &  & & ~ ~ \matX(Y_p,:) (\ul{1}_{N} - \ul{z} ) \succcurlyeq (1 - \epsilon_0) \ul{1}_{M_p} \\
  \nonumber
&   & & ~ ~ \ul{0}_{N} \preccurlyeq \ul{z} \preccurlyeq \ul{1}_{N} ,  
  ~ ~\ul{\eta}_z \succcurlyeq \ul{0}_{M_z} \\
  \nonumber
 &  & & ~ ~ \ul{1}_{N}^T \ul{z} \le L.
\end{align}
In the above, $0 < \epsilon_0 \ll 1$ is a small positive constant. 
Note that (\ref{eq:lp1_p_1}) attempts to model,
in terms of real variables, a boolean statement that at least one 
of the items tested in tests with positive outcomes is a defective item. 
We refer to the algorithm based on {\bf LP1} as {\bf RoLpAl++}. 
We expect {\bf RoLpAl++} to outperform {\bf RoLpAl}, as  
the constraint (\ref{eq:lp1_p_1}) can provide further differentiation between
items that are indistinguishable just on the basis of negative pools.
Note that, due to the constraint $\ul{1}_{N}^T \ul{z} \le L$, the entries of  $\ul{\hat{z}}$ 
in $[N] \backslash \hat{S}_L$ are generally assigned small values. Hence,
when $L$ is small, for many of the positive pools, the
constraint (\ref{eq:lp1_p_1}) may not be active. Thus,
we expect {\bf RoLpAl++} to perform better than {\bf RoLpAl} as the value of
$L$ increases; this will be confirmed via simulation results in Section~\ref{sec:simulations}.
Due to the difficulty in obtaining estimates for the dual variables 
associated with the constraints (\ref{eq:lp1_p_1}), it is difficult to derive theoretical guarantees for {\bf RoLpAl++}. 
However, we expect the guarantees for {\bf RoLpAl++} to be similar to
{\bf RoLpAl}, and we refer the reader to Appendix~\ref{sec_prf_a3pp} for a discussion
regarding the same.


Motivated by the connection between {\bf RoAl} and {\bf RoLpAl}, as revealed in the
proof of Theorem~\ref{thm_A3} (see Section~\ref{sec_prf_a3}), we now propose another 
LP based non-defective subset recovery algorithm that incorporates both 
positive and negative pools, which, in contrast to {\bf RoLpAl++}, turns out 
to be analytically tractable.
By incorporating (\ref{eq:lp1_p_1}) in an unconstrained form and by using the \emph{same}
weights for all the associated Lagrangian multipliers in the optimization function,
we get:
\begin{align} \label{eq:row_lp_1}
  & & \mathop{\text{minimize}}_{\ul{z}} ~ ~ & ~ ~  \ul{1}_{M_z}^T \matX(Y_z,:) (\ul{1}_{N} - \ul{z} )
  - \psi_{lp} \left [\ul{1}_{M_p}^T \matX(Y_p,:) (\ul{1}_{N} - \ul{z} ) \right ] \\
  \nonumber
  & \text{\bf (LP2)} ~ ~& \text{subject to} ~ ~ & ~ ~ 
  \ul{0}_{N} \preccurlyeq \ul{z} \preccurlyeq \ul{1}_{N}, \\
  \nonumber
  & & &  ~ ~ \ul{1}_{N}^T \ul{z} \le L,
\end{align}
where $\psi_{lp} > 0$ is a positive constant that provides appropriate weights to the
two different type of cumulative errors.
Note that, compared to {\bf LP1}, we have also eliminated the equality constraints in the
above program.
The basic intuition is that by using (\ref{eq:lp1_p_1}) in an unconstrained form, i.e., by maximizing
$\sum_{j \in Y_z} \matX(j,:) (\ul{1}_{N} - \ul{z} )$, the 
program will tend to assign higher values to $(1 - \hat{\ul{z}}(i))$ (and hence lower
values to $\hat{\ul{z}}(i))$ for $i \in S_d$
since for random test matrices with i.i.d.\ entries, the defective 
items are likely to be tested more number of times in the pools with positive outcomes.
Also, in contrast to {\bf LP1} where different weightage is given to 
each positive pool via the value of the associated dual variable, {\bf LP2} gives the same weightage to each
positive pool, but it adjusts the overall weightage of positive pools using the constant
$\psi_{lp}$.
We refer to the algorithm based on {\bf LP2} as {\bf CoLpAl}. 
The theoretical analysis for {\bf CoLpAl} follows on similar lines as {\bf RoLpAl} and 
we summarize the main result in the following theorem:
\begin{theorem} \label{thm_A4}
  (Non-Uniform recovery with {\bf CoLpAl}) 
  Let $\Gamma \triangleq (1-q) \lb 1 - (1-u)p \rb^{K}$ and 
  $\gamma_0 \triangleq \frac{u}{(1 - (1-u)p)}$.
  Let $K > 1$ and let
  $p$ be chosen as $\frac{\alpha}{K}$ with $\alpha = \frac{1}{(1-u)}$.
  Let $\psi_0' \triangleq \min \left( \frac{\gamma_0 \Gamma} {1 - \gamma_0 \Gamma}, \frac{\Gamma}{2 (1 - \Gamma)}\right )$
  and set $\psi_{lp} = \psi_0'$.
  Then, for any positive constant $c_0$, there exist absolute constants $C_{a1}, C_{a2} > 0$ independent of $N$, $L$ and $K$,
  such that, if the number of tests is chosen as in (\ref{eq:M_cond_noisy_rba1}) with $\psi_0 = 0$,
  then for a given defective set {\bf{CoLpAl}} finds $L$ non-defective items with probability 
  exceeding ~$1-2\exp\left(-c_0 \log \lb K {N - K \choose L - 1}\rb \right) - \exp\left(-c_0 \log K \right)$.
\end{theorem}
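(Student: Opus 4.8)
The plan is to exploit the fact that \textbf{LP2} has a trivial solution structure and thereby reduce \textbf{CoLpAl} to a thresholding of the very statistic used by \textbf{CoAl}. Since the objective in (\ref{eq:row_lp_1}) equals $\sum_i (1-\ul{z}(i))\,\mc{T}(i)$ with $\mc{T}(i) = \ul{x}_i^T\uy^c - \psi_{lp}(\ul{x}_i^T\uy)$ --- exactly the \textbf{CoAl} statistic of (\ref{eq:greedy_ts}) with $\psi_{cb}=\psi_{lp}$ --- minimizing over the feasible set $\{\ul{0}_N \preccurlyeq \ul{z} \preccurlyeq \ul{1}_N,\ \ul{1}_N^T\ul{z}\le L\}$ is equivalent, up to an additive constant, to maximizing $\sum_i \ul{z}(i)\,\mc{T}(i)$ subject to a box and a single budget constraint. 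First I would write out the KKT/dual conditions for this LP, mirroring the proof of Theorem~\ref{thm_A3}, to show that the optimizer $\hat{\ul{z}}$ lifts to value $1$ precisely the items with the largest strictly positive values of $\mc{T}(\cdot)$ up to the budget $L$, so that the top-$L$ entries of $\hat{\ul{z}}$ coincide with the top-$L$ entries of $\mc{T}(\cdot)$. This produces a clean sufficient condition for correct recovery: there is a threshold $\tau$ with $\mc{T}(i) < \tau$ for every defective item $i$ while at least $L$ non-defective items $j$ satisfy $\mc{T}(j)\ge \tau$.

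Next I would compute the per-item means using Lemma~\ref{sig_mod_facts}. Because $\mc{T}(i) = \sum_{l=1}^M X_{li}\big(\mathbb{I}_{\{Y_l=0\}} - \psi_{lp}\,\mathbb{I}_{\{Y_l=1\}}\big)$ is a sum of $M$ independent, bounded per-row terms, for a non-defective $j$ the independence of $X_{lj}$ and $Y_l$ (Lemma~\ref{sig_mod_facts}(b)) gives $\mathbb{E}[\mc{T}(j)] = Mp(\Gamma - \psi_{lp}(1-\Gamma))$, whereas parts (a) and (c) give $\mathbb{E}[\mc{T}(i)] = Mp(\gamma_0\Gamma - \psi_{lp}(1-\gamma_0\Gamma))$ for a defective $i$, so the mean gap is $Mp\,\Gamma(1-\gamma_0)(1+\psi_{lp})>0$. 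The two branches of $\psi_0' = \min(\tfrac{\gamma_0\Gamma}{1-\gamma_0\Gamma},\ \tfrac{\Gamma}{2(1-\Gamma)})$ are exactly what this step requires: the first branch forces $\mathbb{E}[\mc{T}(i)]\le 0$ for defective items (as in \textbf{CoAl}), while the second forces $\mathbb{E}[\mc{T}(j)] \ge Mp\Gamma/2 > 0$ for non-defective items. The latter positivity, superfluous for the pure sort of \textbf{CoAl}, is essential here because the LP only lifts $\hat{\ul{z}}(i)$ to $1$ when its coefficient $\mc{T}(i)$ is positive; the minimum guarantees the two-sided margin about the separating threshold $\tau$.

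I would then carry out the concentration step. On the defective side, each $\mc{T}(i)$ is an independent sum of terms in $[-\psi_{lp},1]$, so a Hoeffding/Bernstein bound controls $\mathbb{P}(\mc{T}(i)\ge \tau)$ for $\tau$ placed a constant fraction of the mean gap above $\mathbb{E}[\mc{T}(i)]$, and a union bound over the $K$ defectives supplies the $\exp(-c_0\log K)$ term. On the non-defective side the statistics $\{\mc{T}(j)\}_{j\notin S_d}$ are dependent through the shared $\uy$, so I would condition on $\uy$ (equivalently on $M_z,M_p$): given $\uy$, Lemma~\ref{sig_mod_facts}(b) makes the non-defective columns i.i.d.\ and independent of $\uy$, so the $\mc{T}(j)$ are conditionally independent with $\mathbb{E}[\mc{T}(j)\mid\uy] = pM_z - \psi_{lp}\,pM_p$. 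The event that fewer than $L$ non-defectives exceed $\tau$ is then bounded by a union over the $\binom{N-K}{L-1}$ choices of an $(N-K-L+1)$-subset all falling below $\tau$, each factor controlled by conditional independence; combined with a binomial concentration of $M_z$ (resp.\ $M_p$) about $M\Gamma$ (resp.\ $M(1-\Gamma)$), needed to keep $\mathbb{E}[\mc{T}(j)\mid\uy]$ bounded below, this yields the two $\exp(-c_0\log(K\binom{N-K}{L-1}))$ terms. Substituting the choice of $M$ from (\ref{eq:M_cond_noisy_rba1}) with $\psi_0=0$ and collecting the three failure contributions gives the stated success probability.

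I expect the main obstacle to be the non-defective side: correctly decoupling the dependent statistics $\{\mc{T}(j)\}$ by conditioning on $\uy$ while simultaneously controlling the fluctuation of $M_z,M_p$ (which shifts the conditional mean of every $\mc{T}(j)$), and maintaining the margin dictated by the second branch of $\psi_0'$. This coupling between the LP's budget-driven threshold and the sign of the non-defective coefficients is precisely what the $\tfrac{\Gamma}{2(1-\Gamma)}$ cap on $\psi_{lp}$ is designed to tame, and it is the origin of the extra $\exp(-c_0\log(K\binom{N-K}{L-1}))$ term relative to Theorem~\ref{thm_A1A2}.
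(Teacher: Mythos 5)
Your proposal is correct in substance but reaches the theorem by a genuinely different route. The paper never solves {\bf LP2} explicitly: it reuses the KKT machinery of Theorem~\ref{thm_A3}, writing the zero-gradient condition (\ref{eq:lp2_kkt_grad}) and decomposing the error for each defective $i$ as $\mc{E}_i \subseteq \mc{A}_i \cup \mc{B}_i$ with $\mc{A}_i = \{\ul{\lambda}_2(i) = 0\}$ and $\mc{B}_i \subseteq \{\nu = 0\}$, which are then translated into the statistic-comparison event (\ref{eq:Ai_upbnd}) and the all-nonpositive event (\ref{eq:Bi_upbnd}). You instead exploit the fact that {\bf LP2} is a separable (fractional-knapsack) program whose optimum is an explicit thresholding of the {\bf CoAl} statistic with $\psi_{cb}=\psi_{lp}$, so {\bf CoLpAl} is {\bf CoAl} truncated at zero. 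Your two failure modes --- a defective crossing the separating threshold, and fewer than $L$ non-defectives having strictly positive statistics --- are exactly the paper's events (\ref{eq:Ai_upbnd}) and (\ref{eq:Bi_upbnd}), and your reading of the second branch of $\psi_0'$, namely that $\psi_{lp} \le \frac{\Gamma}{2(1-\Gamma)}$ keeps $\mb{E}[\mc{T}(j)] \ge M p \Gamma/2 > 0$ so the budget stays tight, is precisely how the paper bounds $\mb{P}(\cup_i \mc{B}_i)$ in (\ref{eq:pe_thm4_Bi}). Your route is more elementary (no dual variables needed for this particular LP) and, in one respect, more careful: the paper's Proposition~\ref{prop_main_pe_ub}, inherited here, asserts that the $\{\mc{T}(j)\}_{j \in S_z}$ are independent, which strictly holds only conditionally on $\uy$ (unconditionally they are coupled through $M_z$ and $M_p$); your conditioning on $\uy$ together with binomial concentration of $M_z, M_p$ around $M\Gamma, M(1-\Gamma)$ is a rigorous repair of that step, at the cost of one extra exponentially small failure term that is absorbable into the stated probability bound.

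Two small inaccuracies, neither fatal. First, with $\psi_{lp} = \psi_0' \le \frac{\gamma_0\Gamma}{1-\gamma_0\Gamma}$ the defective mean satisfies $\mb{E}[\mc{T}(i)] = Mp\lb \gamma_0\Gamma - \psi_{lp}(1-\gamma_0\Gamma)\rb \ge 0$, not $\le 0$ as you claim; the first branch \emph{caps} $\psi_{lp}$ at the {\bf CoAl} value $\psi_0 < 1$ (which the Bernstein boundedness constants use), and what your argument actually needs is only the positive gap $\mu_j - \mu_i = Mp\Gamma(1-\gamma_0)(1+\psi_{lp})$ together with $\mu_j \ge Mp\Gamma/2$, both of which hold. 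Second, the top-$L$ entries of $\hat{\ul{z}}$ coincide with the top-$L$ entries of $\mc{T}(\cdot)$ only when at least $L$ coefficients are strictly positive; otherwise the sort breaks ties among zero entries arbitrarily --- but your sufficient condition with a strictly positive threshold $\tau$ covers exactly this contingency, so the argument goes through.
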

An outline of the proof of the above theorem is presented in Section~\ref{sec_prf_thmA4}.

\section{Discussion on the Theoretical Guarantees} \label{sec_alg_disc}
We now present some interesting insights by analyzing 
the number of tests required for correct non-defective subset identification by the proposed 
recovery algorithms. 
We note that the expression in (\ref{eq:M_cond_noisy_rba1}) adapted for different algorithms
differs only on account of the constants involved. This allows us to present a unified analysis 
for all the algorithms.
\begin{enumerate}[(a)]
\item Asymptotic analysis of $M$ as $N \rightarrow \infty$: 
We consider the parameter regimes where $K, L \rightarrow \infty$ as $N \rightarrow \infty$.
We note that, under these regimes, when the conditions specified in the theorems are satisfied, the probability of decoding error can be made 
arbitrarily close to zero. In particular, we consider the regime where {$\frac{K}{N} \rightarrow \beta_0$, $\frac{L}{N} \rightarrow \alpha_0$, as $N \rightarrow \infty$, where $0 \le \beta_0 < \alpha_0 < 1$}, $\alpha_0 + \beta_0 < 1$. 
Define $\zeta \triangleq \frac{L-1}{N-K}$, and  $\zeta \rightarrow \zeta_0 \triangleq \frac{\alpha_0}{1 - \beta_0}$ as $N \rightarrow \infty$. Also, note that $\gamma_0 \rightarrow u$ as $N \rightarrow \infty$.
Using Stirling's formula, it can be shown that
$\lim_{N \rightarrow \infty} \frac{\log{N-K \choose L-1}}{(N- K)- (L - 1)} \le 
\frac{H_b(\zeta_0)}{1 - \zeta_0}$ (see, \cite{AC13}), 
where $H_b(\cdot)$ is the binary entropy function.
Further, let $g(\zeta) \triangleq \frac{H_b(\zeta)}{1 - \zeta}$.
Now, since  $g(\zeta_0)$ is a constant, the sufficient number of tests $M$ for the proposed algorithms depends on $K$ as
    $M  \ge C_0 \frac{K}{(1-u)(1-q)} \left( C_{a1} g(\zeta_0) + C_{a2} \log K + o(1) \right)$. Here, $C_0, C_{a1}$ and $C_{a2}$ are constants independent of $N, K, L, u$ and~$q$.

We compare the above with the sufficient number of test required for the defective set recovery
algorithms. 
When $K$ grows sub-linearly with $N$ (i.e., $\beta_0 = 0$), the sufficient number of tests for the proposed decoding algorithms
is $O(K \log K)$, which is better than the sufficient number of tests for finding the defective set, which scales as $O(K \log N)$\cite{vetterli_ngt,jaggi_gtalgo}. Whereas, for the 
regime where $K$ grows linearly with $N$ (i.e., $\beta_0 > 0$), the performance of the proposed algorithms is order-wise equivalent
to defective set recovery algorithms. 

We also compare the uniform recovery results.
The sufficient number of tests for uniform recovery as given in Corollary~\ref{cor_A1} 
for the algorithm {\bf RoAl} and {\bf CoAl} is $M = O(K \log N)$, which is significantly
better than the defective set recovery algorithms, where the sufficient number of tests 
scale as $O(K^2 \log (\frac{N}{K}))$~\cite{vetterli_ngt}.

  \item Variation of $M$ with $L$: 
    Let $\zeta$ and $g(\zeta)$ be as defined above. 
    We note that the parameter $L$ impacts $M$ \emph{only} 
    via the function $g(\zeta)$. Lemma~\ref{lemma_Gamma_approx} in Appendix~\ref{app_gamma_approx} shows that for small values (or even moderately high values) of $\zeta$, $g(\zeta)$ is upper bounded by an affine function in $\zeta$. This, in turn, shows that the sufficient number of tests is also approximately affine in $L$; this is also confirmed via simulation results in Section~\ref{sec:simulations}.

  \item Comparison with the information theoretic lower bounds: 
    We compare with the lower bounds on the number of tests for non-defective subset 
    recovery, as tabulated in Table~\ref{tab:tab_necc_order_1}. 
    For the noiseless case, i.e., $u=0, q=0$, the sufficient number of tests are
    within $O(\log^2 K)$ factor of the lower bound.
    For the additive noise only case, the proposed algorithms incur a factor of 
    $1/(1-q)$ increase in $M$. 
    In contrast, the lower bounds indicate that the number of tests is insensitive to additive noise, 
    when $q$ is close to $0$ (in particular, when $q < 1/K$).
    For the dilution noise case, the algorithms incur a factor $\frac{1}{(1-u)}$ increase in $M$, which is the same as in 
    the lower bound.
    We have also compared the number of tests obtained via simulations with an exact computation 
    of the lower bounds, and, interestingly, the algorithms fall within $O(\log K)$ factor of the
    lower bounds; we refer the reader to 
    Figure~\ref{figure:M_vs_L_lbnd_comp}, Section~\ref{sec:simulations}.


  \item Defective set recovery via non-defective subset recovery: It is interesting to note that by 
    substituting $L=N-K$ in (\ref{eq:M_cond_noisy_rba1}),
    we get $M = O \left( \frac{K \log(N-K)}{(1-u)(1-q)} \right)$, which is order-wise the
    same as the number of tests required for defective set 
    identification derived in the existing literature~\cite{vetterli_ngt,saligrama_CS,jaggi_gtalgo}. 

  \item Robustness under uncertainty in the knowledge of $K$: 
    The theoretical guarantees presented in the above theorems hold provided
    the design parameter $p$ is chosen as $O(\frac{1}{(1-u) K})$. This requires the
    knowledge of $u$ and $K$. {Note that the implementation of the recovery algorithms 
    do not require us to know the values of $K$ or $u$. These system model parameters are
    only required to choose the value of $p$ for constructing the test matrix.}
If $u$ and $K$ are unknown, similar guarantees can be derived, with a penalty on the number of tests. 
    For example, choosing $p$ as $O(1/K)$, i.e., independent of $u$, results in a 
    $\frac{1}{1-u}$ times increase in the number of tests.
    The impact of using an imperfect value of $K$ can also be quantified. Let
    $\hat{K}$ be the value used to design the test matrix and let $\Delta_k > 0$ be such that $\hat{K} = \Delta_k K$. 
    That is, $\Delta_k$ parametrizes the estimation error in $K$.
    Using the fact that for large $n$, $(1 - \alpha/n)^n \approx \exp(-\alpha)$, it follows that
    with $p = O(\frac{1}{\Delta_k K})$, the number of tests increases approximately 
    by a factor of 
    $f_M(\Delta_k) \triangleq \Delta_k \exp\left (-(1-u) \left( \frac{1}{\Delta_k} - 1 \right) \right )$ 
    compared to the case with perfect knowledge of $K$, i.e., with $p = O(1/K)$. 
    It follows that the proposed algorithms are robust to the uncertainty 
  in the knowledge of $K$.
    For example, with $u=0$, $f_M(1.5) = 1.09$, i.e., a $50\%$
    error in the estimation of $K$ leads to only a $9\%$ increase in the number of tests.
    Furthermore, the asymmetric nature of $f_M(\Delta_k)$ (e.g.,  $f_M(1.5) = 1.09$ and  $f_M(0.5) = 1.3$) 
    suggests that the algorithms are more robust when $\Delta_k > 1$ as compared to
    the case when $\Delta_k < 1$. We corroborate this behavior via numerical simulations 
    also (see Table~\ref{tab:robust_K}).

  \item Operational complexity: 
    The execution of {\bf RoAl} and {\bf CoAl} requires $O(M N)$ operations, where $M$ is the number of tests.
    The complexity of the LP based algorithms {\bf RoLpAl},
    {\bf RoLpAl++} and {\bf CoLpAl} are implementation dependent, but are, in general,
    much higher than  {\bf RoAl} and  {\bf CoAl}. 
    For example, an interior-point method based implementation will require
     $O(N^2(M+N)^{3/2})$ operations\cite{nesterov_lect}. Although this is higher than that of 
     {\bf RoAl} and {\bf CoAl}, it is still attractive in comparison to the brute force 
     search based maximum likelihood methods, due to its polynomial-time complexity.


    \end{enumerate}


\begin{table}[t]
  \centering
\caption{Finding a subset of $L$ non-defective items: Order results for necessary number 
of group tests which hold asymptotically as $N \rightarrow \infty$,
$\frac{K}{N} \rightarrow \beta_0$, $\frac{L}{N} \rightarrow \alpha_0$ and $\alpha_0 + \beta_0 < 1$
(see Theorem $3$, \cite{AC13}).}
  \begin{tabular}{|l|l|l|} \hline 
     No Noise ($u=0, q=0$)    & $\Omega \left (\frac{K}{\log K}\log \frac{1 - \beta_0}{1 - \alpha_0 - \beta_0} \right ) $\\ \hline
     Dilution Noise ($u>0, q=0$)
                 & $\Omega \left (\frac{K}{(1-u)\log K }\log \frac{1 - \beta_0}{1 - \alpha_0 - \beta_0} \right ) $ \\ \hline
    Additive Noise ($u=0, q > 0$)
       & $\Omega \left (\frac{K}{\min\left\{ \log \frac{1}{q}, \log K \right\} }\log \frac{1 - \beta_0}{1 - \alpha_0 - \beta_0} \right ) $ \\ \hline
  \end{tabular}
  \label{tab:tab_necc_order_1}
\end{table}

\section{Proofs of the Main Results} \label{sec_thm_proofs}
We begin by defining some quantities and terminology that is common to all the proofs.
In the following, we denote the defective set by $S_d$, such that $S_d \subset [N]$ and $|S_d| = K$.
We denote the set of $L$ non-defective items output by the decoding algorithm by 
$\hat{S}_L$.
For a given defective set $S_d$, 
$\mc{E} \triangleq \left \{ \hat{S}_L \cap S_d \neq \{\emptyset\} \right \}$ denotes
the error event, i.e., the event that a given decoding algorithm outputs an incorrect 
non-defective subset
and let $\text{Pr}(\mc{E})$ denote its probability.
Define $N_0 \triangleq (N-K) - (L-1)$.
We further let $S_z \subset [N]\backslash S_d$ denote any set of non-defective items 
such that $|S_z| = N_0$.  Also, we let $\mc{S}_z$ denote all such sets possible. 
Note that $|\mc{S}_z| = {N-K \choose L-1}$. 
Finally, recall from Lemma~\ref{sig_mod_facts} (Section~\ref{sec:PrbSetup}), $\Gamma \triangleq (1-q) \lb 1 - (1-u)p \rb^{K}$ and 
$\gamma_0 \triangleq \frac{u}{(1 - (1-u)p)}$.

\subsection{Proof of Theorem~\ref{thm_A1A2} and Corollary~\ref{cor_A1}} \label{sec_prf_thm_A2}

The proof involves upper bounding the probability of non-defective subset recovery 
error of the decoding algorithms, {\bf RoAl} and {\bf CoAl}, and identifying the parameter regimes where they can be made sufficiently small.

For {\bf CoAl}, recall that we compute the metric 
$\mc{T}(i) \triangleq \ul{x}_i^T {\ul{y}^c} - (\psi_{cb}) \ul{x}_i^T \ul{y}$ for
each item $i$ and output the set of items with the $L$ largest metrics as the non-defective set.
Clearly, for any item $i \in S_d$, 
if $i \in \hat{S}_L$, then there exists a set $S_z$ of non-defective items such that for 
all items $j \in S_z$, $\mc{T}(j) \le \mc{T}(i)$.
Thus, for {\bf CoAl}, it follows that,
  \begin{align} \label{eq:row_errevnt}
    \mc{E} \subset \mathop{\cup}_{i \in S_d} \{ i \in \hat{S}_L \} \subset
    \mathop{\cup}_{i \in S_d} \mathop{\cup}_{S_z \in \mc{S}_z} 
    \left[\mathop{\cap}_{j \in S_z} \{ \mc{T}(j) \le \mc{T}(i)\}\right] .
  \end{align}
  
The algorithm {\bf RoAl} succeeds when there exists a set of at 
least $L$ non-defective items that have been tested more number of times than 
any of the defective items, in the tests with negative outcomes. 
The number of times an item $i$ is tested in tests with negative outcomes 
is given by $\ul{z}(i)(=\ul{x}_i^T {\ul{y}^c})$, which is computed by {\bf RoAl}.
Hence, for any item $i \in S_d$, if $i \in \hat{S}_L$, then there 
exists a set $S_z$ of non-defective items such that for all items $j \in S_z$, 
$\ul{z}(j) \le \ul{z}(i)$.
And, thus, (\ref{eq:row_errevnt}) applies for {\bf RoAl} also, except with 
$\mc{T}$ replaced with $\ul{z}$.
Also, note that $\ul{z}(i) = \mc{T}(i) |_{\psi_{cb} = 0}$. This allows us
to unify the subsequent steps in the proof for the two algorithms.
We first work with the quantity $\mc{T}(i)$ and later  specialize the results for each algorithm.
The overall intuition for the proof is as follows:
For any $i$, since $\mc{T}(i)$ is a sum of independent random variables, it will tend 
to concentrate around its mean value.
For any $i \in S_d$ and $j \notin S_d$, we will show that the mean value 
of $\mc{T}(j)$ is larger than that of $\mc{T}(i)$. Thus, we expect the probability of the error event defined in (\ref{eq:row_errevnt}) to be small.

For any $i \in S_d$ and for any $j \notin S_d$, define 
$\mu_i \triangleq \mb{E}(\mc{T}(i))$, $\mu_j \triangleq \mb{E}(\mc{T}(j))$,
$\sigma_i^2 \triangleq \text{Var}(\mc{T}(i))$ and $\sigma_j^2 \triangleq \text{Var}(\mc{T}(j))$.
It follows that, 
\begin{align} \label{eq:mean_ij}
  \mu_j &= M p \lb \Gamma - \psi_{cb}(1 -\Gamma) \rb ~ ~ \mbox{and} ~ ~
  \mu_i = M p \lb \gamma_0 \Gamma - \psi_{cb}(1 -\gamma_0 \Gamma) \rb\\
  \label{eq:var_j}
  \sigma_j^2 &
  \le M p \lb \Gamma + \psi_{cb}^2(1 -\Gamma) \rb ~ ~ \mbox{and} ~ ~
  \sigma_i^2 
  \le M p \lb \gamma_0 \Gamma + \psi_{cb}^2(1 -\gamma_0 \Gamma) \rb.
\end{align}
An brief explanation of the above equations in presented in Appendix~\ref{prf_Z0Z1_stats}.
We note that, $(\mu_j - \mu_i) = M p \Gamma (1-\gamma_0) (1+\psi_{cb}) > 0$. 
To simplify (\ref{eq:row_errevnt}) further, we present the following proposition:
\begin{proposition} \label{prop_main_pe_ub}
Define $\tau \triangleq \frac{(\mu_j + \mu_i)}{2}$. Then, for any 
$\epsilon_0 > 0$ it follows that
  \begin{align} \label{eq:main_pe_rbcb_ub}
    \mb{\text{Pr}}(\mc{E}) &\le K {N-K \choose L -1} \lb P_{eh} \rb^{N_0} + K P_{ed}, 
  \end{align}
  where, $P_{eh} \triangleq \mb{P}\lb\{\mc{T}(j) < \tau + \epsilon_0\}\rb$ for any 
   $j \in S_z$ and  
   $P_{ed} \triangleq \mb{P}\lb \{\mc{T}(i) > \tau\} \rb$ for any $i \in S_d$.
\end{proposition}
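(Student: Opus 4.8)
The plan is to convert the set inclusion in (\ref{eq:row_errevnt}) into the scalar bound by a threshold split at $\tau$ followed by two union bounds, and then to reduce the joint event ``many non-defective items have small metric'' to a power of a single-item tail probability. Since the excerpt has already unified the two algorithms through $\ul{z}(i) = \mc{T}(i)|_{\psi_{cb}=0}$, I would argue entirely in terms of $\mc{T}$. The guiding observation is that $\mu_j - \mu_i = Mp\Gamma(1-\gamma_0)(1+\psi_{cb}) > 0$, so the midpoint $\tau = (\mu_j+\mu_i)/2$ lies strictly between the typical metric of a non-defective item (near $\mu_j$) and that of a defective item (near $\mu_i$); this is exactly what will make both $P_{ed}$ and $P_{eh}$ one-sided deviations below/above the respective means, hence small.

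First I would fix a single defective item $i \in S_d$ and bound $\mb{P}(i \in \hat{S}_L)$. Splitting on the position of $\mc{T}(i)$ relative to $\tau$, I write $\{i \in \hat{S}_L\} \subset \{\mc{T}(i) > \tau\} \cup \left( \{i \in \hat{S}_L\} \cap \{\mc{T}(i) \le \tau\} \right)$. On the second event, membership of $i$ in the top-$L$ set forces at least $N_0 = (N-K)-(L-1)$ non-defective items $j$ to satisfy $\mc{T}(j) \le \mc{T}(i) \le \tau < \tau + \epsilon_0$, so this event is contained in $\cup_{S_z \in \mc{S}_z} \cap_{j \in S_z}\{\mc{T}(j) < \tau + \epsilon_0\}$. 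A union bound over the $\binom{N-K}{L-1}$ sets $S_z \in \mc{S}_z$ then gives $\mb{P}(i \in \hat{S}_L) \le P_{ed} + \binom{N-K}{L-1}\mb{P}\left( \cap_{j \in S_z}\{\mc{T}(j) < \tau + \epsilon_0\} \right)$, and a final union bound over the $K$ defective items produces exactly the two terms of the claimed inequality, \emph{provided} the joint tail can be bounded by $(P_{eh})^{N_0}$.

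The crux is therefore to establish $\mb{P}\left( \cap_{j \in S_z}\{\mc{T}(j) < \tau + \epsilon_0\} \right) \le (P_{eh})^{N_0}$. To expose the needed independence I would condition on the output vector $\ul{y}$: by Lemma~\ref{sig_mod_facts}(b) each non-defective column $\ul{x}_j$ is independent of $\ul{y}$, while the i.i.d.\ construction of $\matX$ makes the columns $\{\ul{x}_j\}_{j \in S_z}$ mutually independent. Hence, given $\ul{y}$, the metrics $\{\mc{T}(j)\}_{j \in S_z}$ are i.i.d.\ (each is $\sum_l X_{lj} f(Y_l)$ with $f(0)=1$, $f(1)=-\psi_{cb}$), and the conditional joint probability factorizes into $N_0$ identical conditional tails; taking expectation over $\ul{y}$ then yields the desired $N_0$-th power structure in terms of the single-item probability $P_{eh} = \mb{P}(\mc{T}(j) < \tau + \epsilon_0)$.

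The main obstacle lies precisely in this factorization. Because all non-defective metrics share the common outcome vector $\ul{y}$ and are comonotone in the number of negative pools $M_z$, the events $\{\mc{T}(j) < \tau + \epsilon_0\}$ are positively correlated, so the product bound cannot be asserted \emph{unconditionally}; it is legitimate only after conditioning on $\ul{y}$, where the conditional tail still depends on $\ul{y}$ through $M_z$. I would control this by noting $\mb{E}[M_z] = M\Gamma$ (Lemma~\ref{sig_mod_facts}(a)) and restricting to the high-probability event that $M_z$ concentrates near $M\Gamma$, on which the conditional tail is uniformly dominated, thereby recovering the stated bound in terms of the unconditional $P_{eh}$ up to constants absorbed later into $C_{a1}, C_{a2}$. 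By contrast, the companion term $\mb{P}(\mc{T}(i) > \tau) = P_{ed}$ requires no factorization, and the explicit estimation of both $P_{eh}$ and $P_{ed}$ via a Chernoff/Bernstein-type concentration bound (using the means and variances in (\ref{eq:mean_ij})–(\ref{eq:var_j})) is deferred to the subsequent step that turns this proposition into the number-of-tests guarantee.
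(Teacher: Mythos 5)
Your decomposition coincides with the paper's own proof step for step: your split of $\{i \in \hat{S}_L\}$ at $\tau$ is exactly the paper's use of $A \subset (A \cap B) \cup \overline{B}$ with $B = \mc{H}_i = \{\mc{T}(i) \le \tau\}$ in (\ref{eq:errevnt_row_col}), your monotonicity step is (\ref{eq:errevnt_sz}), and the union bounds over $\mc{S}_z$ and over the $K$ defectives are identical. The divergence is at the factorization step, and there you are \emph{more} careful than the paper. The paper's proof asserts outright that, since for a given $\ul{y}$ each $\mc{T}(j)$ is a function of $\ul{x}_j$ alone and the output is independent of the non-defective columns, the $\{\mc{T}(j)\}_{j \in S_z}$ are independent, and hence the joint tail is $(P_{eh})^{N_0}$ with the \emph{unconditional} $P_{eh}$. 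You refuse this step, and you are right to: the stated argument only delivers independence \emph{conditioned} on $\ul{y}$. Unconditionally, the metrics are positively correlated through the shared outcome vector: by the law of total covariance, for distinct non-defective $j, j'$ one gets $\mathrm{Cov}(\mc{T}(j), \mc{T}(j')) = p^2 (1+\psi_{cb})^2 \, \mathrm{Var}(M_z) = p^2 (1+\psi_{cb})^2 M \Gamma (1-\Gamma) > 0$, and with $Q(\ul{y}) \triangleq \mb{P}(\mc{T}(j) < \tau + \epsilon_0 \mid \ul{y})$, Jensen's inequality gives $\mb{P}\lb \cap_{j \in S_z} \{\mc{T}(j) < \tau + \epsilon_0\} \rb = \mb{E}[Q(\ul{y})^{N_0}] \ge \lb \mb{E}[Q(\ul{y})] \rb^{N_0} = (P_{eh})^{N_0}$, i.e., the product bound as invoked in the paper runs in the wrong direction. (Concretely: conditioned on $M_z = 0$, for {\bf RoAl} every $\mc{T}(j) = 0 < \tau + \epsilon_0$, so the joint tail is at least $(1-\Gamma)^M$ no matter how large $N_0$ is, while $(P_{eh})^{N_0}$ can be made arbitrarily smaller.) So rather than containing a gap, your proposal exposes one in the paper's own argument.

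Two caveats on your repair. First, your third paragraph briefly claims that "taking expectation over $\ul{y}$ then yields the desired $N_0$-th power structure" --- that is precisely the invalid step, and your fourth paragraph correctly supersedes it; the final write-up must run the truncation argument, not the expectation argument. Your proposed fix --- condition on $\ul{y}$ (the conditional law of the metrics depends on $\ul{y}$ only through $M_z$), restrict to a concentration event such as $\{M_z \ge (1-\delta) M \Gamma\}$ on which the conditional tail is uniformly dominated by a Bernstein-type bound with the same exponent up to constants, and add the exponentially small probability of the complement --- is the standard fix and suffices for Theorem~\ref{thm_A1A2} after absorbing constants into $C_{a1}, C_{a2}$. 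Second, as you yourself concede, this proves a variant of the proposition ($P_{eh}$ replaced by a conditional tail, plus an additive deviation term), not the displayed inequality (\ref{eq:main_pe_rbcb_ub}) verbatim with the unconditional $P_{eh}$; if the proposition is to be kept in its present form, it is established neither by your route nor by the paper's.
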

The proof of the above proposition is presented in Section \ref{prf_prop_main_pe_ub}. Note that the above definitions of $P_{eh}$ and $P_{ed}$ are unambiguous because the corresponding probabilities are independent of the specific choice of indices $j$ and $i$, respectively.

Our next task is to bound $P_{eh}$ and $P_{ed}$ as defined in the above proposition.
For any $k$, since $\mc{T}(k)$ is a sum of $M$ independent random variables, each bounded by
$\max(1, \psi_{cb})$, we can use Bernstein's inequality~\cite{Lugosi06com}\footnote{For ease of 
reference, we have stated it in Appendix~\ref{sec_chernoff}.}
to bound the probability of their deviation from their mean values.
Since $\psi_{cb}$ is a free parameter, we proceed by assuming that $\psi_{cb} < 1$.
Thus, for any $i \in S_d$, with $\delta_0 \triangleq \tau - \mu_i = \frac{\mu_j - \mu_i}{2}$,
\begin{align} \label{eq:ped_ub}
  P_{ed} = P(\mc{T}(i) > \tau) = P(\mc{T}(i) > \mu_i + \delta_0) \le 
  \exp \lb - \frac{\delta_0^2}{2 \sigma_i^2 + \frac{2}{3} \delta_0 } \rb.
\end{align}
Similarly, for any $j \in S_z$, we choose $\epsilon_0 = \frac{\mu_j - \tau}{2} = \frac{\mu_j - \mu_i}{4}$, and get
\begin{align} \label{eq:peh_ub}
  P_{eh} = P(\mc{T}(j) < \tau + \epsilon_0) = P(\mc{T}(j) < \mu_j - \epsilon_0) \le 
  \exp \lb - \frac{\epsilon_0^2}{2 \sigma_j^2 + \frac{2}{3} \epsilon_0 } \rb.
\end{align}

We now proceed separately for each algorithm to arrive at the final results. 
Before that, we note that by choosing $p = \frac{\alpha}{K}$ with $\alpha = \frac{1}{(1-u)}$,
$\left[1 - \frac{(1-u) \alpha}{K} \right ]^K \ge \exp \left(- 2 \alpha (1-u) \right ) = e^{-2}$. 
This follows from the fact that for $0 < b < 1$, $(1 - b) \le e^{-b} \le 1 - \frac{b}{2}$. 
Thus, $(1-q) e^{-1} \ge \Gamma \ge  (1-q)e^{-2}$.
We also note that $\gamma_0 < 1$ for any $u < 0.5$ and for all 
$K > 1$.
\subsubsection{Proof for {\bf RoAl}}
For {\bf RoAl}, $\psi_{cb} = 0$. Thus, from (\ref{eq:mean_ij}) and (\ref{eq:var_j}) 
we have, $\mu_j - \mu_i = M p \Gamma (1-\gamma_0)$,
$\sigma_j^2 \le M p \Gamma$ and $\sigma_i^2 \le M p \gamma_0 \Gamma$.
Recall, $\delta_0 = \frac{\mu_j - \mu_i}{2}$ and $\epsilon_0 = \frac{\mu_j - \mu_i}{4}$.
Note that, $2 \sigma_i^2 + (2/3) \delta_0 < M p \Gamma \lb 2 \gamma_0 + (1-\gamma_0)/3 \rb 
< 2 M p \Gamma$.
Similarly, $2 \sigma_j^2 + (2/3) \epsilon_0 < M p \Gamma \lb 2 + (1-\gamma_0)/6 \rb < 3 M p \Gamma$. 
Thus, from (\ref{eq:ped_ub}) and (\ref{eq:peh_ub}), we have
\begin{align} \label{eq:peh_ub_1}
  P_{ed} \le \exp \lb - \frac{M p \Gamma (1 - \gamma_0)^2}{8}\rb ~ \mbox{and} ~ 
  P_{eh} \le \exp \lb - \frac{M p \Gamma (1 - \gamma_0)^2}{48}\rb.
\end{align}
Thus, choosing $p = \frac{1}{(1-u)K}$ and noting that $\Gamma \ge e^{-2} (1-q)$, from (\ref{eq:main_pe_rbcb_ub}) we get,
\begin{align} 
\nonumber
  \mb{P}(\mc{E}) \le \exp \ls - \frac{M (1-\gamma_0)^2(1-q) N_0}{C_{a1} K (1-u)}  +  
  { \log \lb K {N - K \choose L - 1} \rb} \rs
  + \exp \ls -\frac{M (1-\gamma_0)^2(1-q) }{C_{a2} K (1-u)}  + 
  \log K \rs, 
\end{align}
with $C_{a1} = 48 e^2$ and $C_{a2} = 8 e^2$. 
Thus, if $M$ is chosen as specified in (\ref{eq:M_cond_noisy_rba1}), 
with the constants $C_{a1}$, $C_{a2}$ chosen as above, then
the error probability is upper bounded by $\exp\left(-c_0 \log \left [K {N-K \choose L - 1}\right]  \right) 
+ \exp(-c_0 \log K)$. 

\subsubsection{Proof for \bf{CoAl}}
We first bound $P_{ed}$. 
With $\psi_{cb} = \psi_0$, where $\psi_0 \triangleq \frac{\gamma_0\Gamma}{1 - \gamma_0\Gamma}$,  we have 
$\sigma_i^2 \le M p \gamma_0\Gamma (1 + \psi_0)$. Also, we note that $\psi_0 < 1$.
Thus, $2 \sigma_i^2 + (2/3) \delta_0 < M p \Gamma (1 + \psi_0)\lb 2\gamma_0 + (1-\gamma_0)/3\rb 
< 2 M p \Gamma(1 + \psi_0)$. 
Thus, from (\ref{eq:ped_ub}), we get
  \begin{align}
    P_{ed} \le \exp \lb - \frac{M p \Gamma (1+\psi_0) (1 - \gamma_0)^2}{8}\rb.
  \end{align}
With $\psi_0$ as above, it follows that,
$2 \sigma_j^2 + (2/3) \epsilon_0 < M p \Gamma \lb 2 +  2 \frac{\gamma_0^2\Gamma}{(1-\gamma_0 \Gamma)} + \frac{(1-\gamma_0)(1+\psi_0)}{6} \rb
< 3 M p \Gamma (1+ \psi_0)$, since $1+\psi_0=\frac{1}{1 - \gamma_0 \Gamma}$.
Thus, from (\ref{eq:peh_ub}), we get
  \begin{align}
    P_{eh} \le \exp \lb - \frac{M p \Gamma (1+\psi_0) (1 - \gamma_0)^2}{48}\rb.
  \end{align}

The next steps follow exactly as for {\bf RoAl} and if $M$ is chosen as specified in (\ref{eq:M_cond_noisy_rba1}), 
with the constant $C_{a1}$ and $C_{a2}$ chosen as $48e^2$ and $8e^2$, respectively, then
the error probability remains smaller than $\exp\left(-c_0 \log \left [K {N-K \choose L - 1}\right]  \right) 
+ \exp(-c_0 \log K)$.

\subsubsection{Proof of Proposition~\ref{prop_main_pe_ub}} \label{prf_prop_main_pe_ub}
  For $i \in S_d$, define $\mc{H}_i \triangleq \{\mc{T}(i) \le \tau\}$.
  The error event in (\ref{eq:row_errevnt}) is a subset of the right hand side in the following equation:
  \begin{align} \label{eq:errevnt_row_col}
    \mc{E} \subset \mathop{\cup}_{i \in S_d}\left (  \{ \mathop{\cup}_{S_z \in \mc{S}_z} 
    \mathop{\cap}_{j \in S_z} (\mc{E}_{ij} \cap {\mc{H}_i}) \} \cup \overline{\mc{H}_i} \right ),
  \end{align}
  where $\mc{E}_{ij} \triangleq \{ \mc{T}(j) \le \mc{T}(i) \}$ for any $i \in S_d$ and $j \in S_z$.
  In the above, we have used the fact that, for any two sets $A$ and $B$, 
  $A \subset \{A \cap {B}\} \cup \overline{B}$. Further, using monotonicity properties, we have
  \begin{align} \label{eq:errevnt_sz}
    \left \{ \mc{E}_{ij} \cap {\mc{H}_i} \right \} \subset \{\mc{T}(j) \le \tau \} \subset \{ \mc{T}(j) < \tau + \epsilon_0\},
  \end{align}
  where $\epsilon_0 > 0$ is any constant.
  Consider any non-defective item $j\in S_z$. We note that for any given $\ul{y}$, $\mc{T}(j)$ can be 
  represented as a function of only $\ul{x}_j$, i.e., the $j^{\text{th}}$ column of the test matrix $\matX$. 
  From (\ref{eq:gtmodel}),  since $j \notin S_d$, the output is independent of the entries 
  of $\ul{x}_j$. Hence, for all $j \notin S_d$, and hence for all $j \in S_z$,  $\mc{T}(j)$'s are independent. 
  Using this observation, the claim in the proposition now follows from (\ref{eq:errevnt_row_col}) and
  (\ref{eq:errevnt_sz}) by accounting for the cardinalities of different sets involved in the union bounding.

\subsubsection{Proof of Corollary~\ref{cor_A1}}
For the uniform case, we use the union bound over all possible choices of the
defective set. 
The proof of the corollary follows same steps as the proof of Theorem~\ref{thm_A1A2}; the only
difference comes on account of the additional union bounding that has to be done to
account for all possible choices of the defective set.
Here, we briefly discuss the different multiplicative factors that have to be included
because of this additional union bound.
Let $\mc{S}_d$ denote the set of all possible defective sets. Note
that $|\mc{S}_d| = {N \choose K}$.
From (\ref{eq:errevnt_row_col}) in the proof of Proposition~\ref{prop_main_pe_ub}, we note that
  \begin{align} \label{eq:errevnt_unif}
    \mc{E} \subset \left \{ \mathop{\cup}_{S_d \in \mc{S}_d} \mathop{\cup}_{i \in S_d}  \mathop{\cup}_{S_z \in \mc{S}_z} 
    \mathop{\cap}_{j \in S_z} (\mc{E}_{ij} \cap {\mc{H}_i}) \right \} \bigcup 
    \left \{ \mathop{\cup}_{S_d \in \mc{S}_d} \mathop{\cup}_{i \in S_d} \overline{\mc{H}_i} \right \},
  \end{align}

Thus, for the first term in (\ref{eq:main_pe_rbcb_ub}), an additional multiplicative factor of
${N \choose K}$ is needed to account for all possible defective sets.
For the second term, we note that
\begin{align}
  \mathop{\cup}_{S_d \in \mc{S}_d} \mathop{\cup}_{i \in S_d} \overline{\mc{H}_i} \subseteq \mathop{\cup}_{i \in [N]} \overline{\mc{H}_i}. 
\end{align}
Thus, for the second term in (\ref{eq:main_pe_rbcb_ub}), the multiplicative factor of $K$ in (\ref{eq:main_pe_rbcb_ub}) gets replaced by a factor of $N$, and no additional combinatorial multiplicative factors are needed.
The corollary now follows using the same steps as in the proof of {\bf RoAl} and {\bf CoAl}.

\subsection{Proof of Theorem~\ref{thm_A3}} \label{sec_prf_a3}
Let $\matX \in \{0, 1\}^{M \times N}$ denote the random test matrix, 
$\ul{y} \in \{0,1\}^M$ the output of the group test, $Y_z \triangleq \{l \in [M]: \uy(l) = 0 \}$ with
$M_z \triangleq |Y_z|$, and $Y_p \triangleq \{l \in [M]: \uy(l) = 1 \}$ with $M_p \triangleq |Y_p|$.
Let $\matX_z \triangleq \matX(Y_z,:)$ and $\matX_p \triangleq \matX(Y_p,:)$.
Note that $\matX_z \in \{0, 1\}^{M_z \times N}$ and $\matX_p \in \{0, 1\}^{M_p \times N}$.
For the ease of performance analysis of the  LP described in (\ref{eq:row_lp_0}), we work 
with the following equivalent program:
\begin{align} \label{eq:row_lp_0a}
  & &\mathop{\text{minimize}}_{\ul{z}} ~ ~ & ~ ~  \ul{1}_{M_z}^T \matX_z~ \ul{z}  \\
  \nonumber
  & \text{\bf (LP0a)} &\text{subject to} ~ ~ & ~ ~ \ul{0}_{N} \preccurlyeq \ul{z} 
  \preccurlyeq \ul{1}_{N}, ~ ~ ~ ~ ~ ~ ~ ~ ~ ~ ~ ~ ~ ~ ~ ~ ~ ~ ~ ~ ~ ~ ~  \\
  \nonumber
  & & & ~ ~ \ul{1}_{N}^T \ul{z} \ge (N-L).
\end{align}
The above formulation has been arrived at by eliminating the equality constraints 
and replacing the optimization variable $\ul{z}$ by $(\ul{1}_N - \ul{z})$. 
Hence, the non-defective subset output by
(\ref{eq:row_lp_0a}) is indexed by the \emph{smallest} $L$ entries in the solution 
of ({LP0a}) (as opposed to largest $L$ entries in the solution of ({LP0})). 
We know that strong duality holds for a linear program and that any pair
of primal and dual optimal points satisfy the 
Karush-Kuhn-Tucker (KKT) conditions\cite{boydcvxbook}. 
Hence, a characterization of the primal solution can be obtained in terms of the dual 
optimal points by using the KKT conditions.
Let $\ul{\lambda}_1, \ul{\lambda}_2 \in \mathbb{R}^{N}$ and $\nu \in \mathbb{R}$ denote the dual variables
associated with the inequality constraints in (LP0a). 
The KKT conditions for any pair
of primal and dual optimal points corresponding to (LP0a) can be written as follows:
\begin{align} 
  \label{eq:lp0_kkt_grad}
  & \ul{1}_{M_z}^T \matX_z - \ul{\lambda}_1 + \ul{\lambda}_2 - \nu \ul{1}_N = \ul{0}_{N} \\
  \label{eq:lp0_kkt_cs}
  & \ul{\lambda}_1 \circ \ul{z} = \ul{0}_N; ~ \ul{\lambda}_2 \circ (\ul{z} - \ul{1}_N) = \ul{0}_N; ~
  \nu (\ul{1}_N^T \ul{z} - (N-L)) = 0; \\
  \label{eq:lp0_kkt_feas}
  & \ul{0}_{N} \preccurlyeq \ul{z} \preccurlyeq \ul{1}_{N};~
  \ul{1}_{N}^T \ul{z} \ge (N-L);~
  \ul{\lambda}_1 \succcurlyeq \ul{0}_N; ~
  \ul{\lambda}_2 \succcurlyeq \ul{0}_N; ~
  \nu \ge 0; 
\end{align}
Let $(\ul{z}, \ul{\lambda}_1, \ul{\lambda}_2, \nu)$ be the primal, dual optimal
point, i.e., a point satisfying the set of equations (\ref{eq:lp0_kkt_grad})-(\ref{eq:lp0_kkt_feas}).
Let $S_d$ denote the set of defective items. Further, let $\hat{S}_L$ denote the 
index set corresponding to the smallest $L$ entries, and hence the declared 
set of non-defective items, in the primal solution $\ul{z}$. We first derive a sufficient
condition for successful non-defective subset recovery with {\bf RoLpAl}.
\begin{proposition} \label{rblp_prop1}
  If $\ul{\lambda}_2(i) > 0 ~ \forall ~ i \in S_d$, then $\hat{S}_L \cap S_d = \{\emptyset\}$.
\end{proposition}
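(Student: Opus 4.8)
The plan is to exploit the KKT conditions (\ref{eq:lp0_kkt_grad})--(\ref{eq:lp0_kkt_feas}) to show that the hypothesis $\ul{\lambda}_2(i) > 0$ for all $i \in S_d$ forces every defective item to be assigned the \emph{maximal} value $\ul{z}(i) = 1$ in the primal optimal solution of {\bf LP0a}. First I would invoke the complementary slackness condition $\ul{\lambda}_2 \circ (\ul{z} - \ul{1}_N) = \ul{0}_N$ from (\ref{eq:lp0_kkt_cs}): for any coordinate $i$ with $\ul{\lambda}_2(i) > 0$, this product can vanish only if $\ul{z}(i) - 1 = 0$, i.e.\ $\ul{z}(i) = 1$. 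Applying this to every $i \in S_d$ under the hypothesis immediately yields $\ul{z}(i) = 1$ for all defective items.

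Next I would argue that because $\hat{S}_L$ is defined (in the reformulated program {\bf LP0a}) as the index set of the \emph{smallest} $L$ entries of $\ul{z}$, and the defective entries all attain the maximal feasible value $1$, no defective item can lie among these smallest $L$ entries---unless the solution is degenerate in the sense that more than $N-L$ coordinates equal $1$. The clean way to close this gap is to show there exist at least $L$ non-defective coordinates strictly below the defective value. Since $|S_d| = K$ and the feasibility constraint $\ul{1}_N^T \ul{z} \ge (N-L)$ together with the box constraint $\ul{z} \preccurlyeq \ul{1}_N$ bounds the total mass, I would verify that the program cannot push all non-defective coordinates up to $1$ as well: the objective $\ul{1}_{M_z}^T \matX_z \ul{z}$ is being minimized, so the optimizer has an incentive to keep coordinates small, and the non-defective items participating in negative pools will be driven strictly below $1$. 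Consequently at least $L$ non-defective items have $\ul{z}$-value strictly less than $1 = \ul{z}(i)$ for $i \in S_d$, so the smallest $L$ entries are all non-defective, giving $\hat{S}_L \cap S_d = \{\emptyset\}$.

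The main obstacle I anticipate is handling ties and the degenerate case where many coordinates simultaneously equal $1$: strictly speaking, ``smallest $L$ entries'' must be interpreted carefully when there are ties at the value $1$. The cleanest route to a rigorous statement is to observe that the proposition only asserts a sufficient condition, so I would argue that \emph{whenever} $\ul{\lambda}_2(i) > 0$ holds strictly for all $i \in S_d$, the defective coordinates sit at the strict maximum and there is enough ``room'' (guaranteed by $L \le N - K$ and the structure of the feasible set) for at least $L$ non-defective coordinates to be no larger, so that a consistent tie-breaking places all defectives outside the bottom $L$. In short, the heart of the argument is the single complementary-slackness step $\ul{\lambda}_2(i) > 0 \Rightarrow \ul{z}(i) = 1$; the remaining work is the bookkeeping needed to translate ``all defectives sit at value $1$'' into ``no defective is among the $L$ smallest entries,'' which I expect to be routine once the feasibility and counting constraints are laid out.
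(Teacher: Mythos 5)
Your opening step is correct and matches the paper's starting point: complementary slackness $\ul{\lambda}_2 \circ (\ul{z} - \ul{1}_N) = \ul{0}_N$ in (\ref{eq:lp0_kkt_cs}) does give $\ul{z}(i) = 1$ for every $i \in S_d$. The gap is in how you close the degenerate case. Your claim that the minimization objective ``drives'' at least $L$ non-defective coordinates strictly below $1$ is not valid: the feasibility constraint $\ul{1}_N^T \ul{z} \ge (N-L)$ is a \emph{lower} bound on the total mass, so an optimal solution typically pushes about $N-L$ coordinates all the way up to $1$; worse, a non-defective item $j$ that never appears in a negative pool has $\ul{1}_{M_z}^T \matX_z(:,j) = 0$, contributes nothing to the objective, and can be raised to $\ul{z}(j) = 1$ at no cost, so optimal solutions with more than $N-L$ coordinates equal to $1$ genuinely exist. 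No incentive argument about the primal objective alone can exclude this, and your fallback --- that ``a consistent tie-breaking places all defectives outside the bottom $L$'' --- only shows that \emph{some} tie-breaking succeeds, which is weaker than what the proposition asserts.

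The missing idea is the use of the dual variable $\nu$ together with the stationarity condition, which is exactly how the paper argues. Suppose some $j \in \hat{S}_L$ had $\ul{\lambda}_2(j) > 0$; then $\ul{z}(j) = 1$, and $\ul{\lambda}_1 \circ \ul{z} = \ul{0}_N$ forces $\ul{\lambda}_1(j) = 0$. Since $j$ lies among the smallest $L$ entries and every entry is at most $1$, all $N-L$ entries outside the bottom $L$ must also equal $1$, so $\ul{1}_N^T \ul{z} > N - L$ and complementary slackness forces $\nu = 0$. The zero-gradient condition (\ref{eq:lp0_kkt_grad}) then gives $\ul{1}_{M_z}^T \matX_z(:,j) = \ul{\lambda}_1(j) - \ul{\lambda}_2(j) + \nu = -\ul{\lambda}_2(j) < 0$, contradicting the nonnegativity of $\matX$. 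Hence no bottom-$L$ item can carry $\ul{\lambda}_2 > 0$, so under your hypothesis no defective item can enter $\hat{S}_L$, regardless of tie-breaking. In other words, the degenerate case you worried about is ruled out by the hypothesis on $\ul{\lambda}_2$ acting through the KKT system --- not by any property of the optimizer's incentives --- and without invoking $\nu$ and the gradient equation your argument cannot be completed.
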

\noindent Proof: See Appendix \ref{prf_rblp_prop1}.

Let $\mc{E}$, $\mb{P}(\mc{E})$, $S_z$ and $\mc{S}_z$ be as defined at the 
beginning of this section.
The above sufficiency condition for successful non-defective subset recovery,
in turn, leads to the following: 
\begin{proposition} \label{rblp_prop2}
  The error event associated with {\bf RoLpAl} satisfies:
\begin{align} \label{eq:pe0_final}
  \mc{E} \subseteq \mathop{\cup}_{i \in S_d} 
  \mathop{\cup}_{ S_z \in \mc{S}_z} \left \{\ul{1}_{M_z}^T \matX_z(:,i)~  \ge ~
    \ul{1}_{M_z}^T \matX_z(:,j), \forall j \in S_z \right \}.
\end{align}
\end{proposition}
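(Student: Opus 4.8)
The plan is to convert the dual-variable sufficient condition of Proposition~\ref{rblp_prop1} into a purely combinatorial statement about the negative-pool test counts. Throughout, I would write $c_k \triangleq \ul{1}_{M_z}^T \matX_z(:,k)$ for the number of negative-outcome pools in which item $k$ is tested, so that the objective of \textbf{(LP0a)}, given in (\ref{eq:row_lp_0a}), is exactly $\sum_k c_k \ul{z}(k)$ and the row vector $\ul{1}_{M_z}^T \matX_z$ has entries $c_k$. First I would take the contrapositive of Proposition~\ref{rblp_prop1}: if the error event $\mc{E}$ occurs, then at the primal-dual optimal point $(\ul{z},\ul{\lambda}_1,\ul{\lambda}_2,\nu)$ there is at least one defective index $i \in S_d$ with $\ul{\lambda}_2(i)=0$. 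The remaining work is to show that this forces $c_i$ to dominate the counts of at least $N_0$ non-defective items, which is precisely the event on the right-hand side of (\ref{eq:pe0_final}).

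Next I would extract the threshold structure of the solution from the KKT conditions. Reading (\ref{eq:lp0_kkt_grad}) componentwise gives $\ul{\lambda}_2(k) - \ul{\lambda}_1(k) = \nu - c_k$, and combining this with the complementary slackness relations (\ref{eq:lp0_kkt_cs}) yields the dictionary $c_k > \nu \Rightarrow \ul{z}(k)=0$, $c_k < \nu \Rightarrow \ul{z}(k)=1$, and $0 < \ul{z}(k) < 1 \Rightarrow c_k = \nu$. In particular, $\ul{\lambda}_2(i)=0$ together with $\ul{\lambda}_1(i) = c_i - \nu \ge 0$ gives $c_i \ge \nu$ for the offending defective index $i$; that is, the count of the misdeclared defective item sits at or above the threshold $\nu$.

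The core counting step is then to bound the number of items whose count strictly exceeds $c_i$. Assuming the sum constraint is active (so that $\nu>0$ and $\ul{1}_N^T\ul{z} = N-L$), the deficit identity $\sum_k (1-\ul{z}(k)) = L$ forces $|\{k : \ul{z}(k)=0\}| \le L$, and since $\{k : c_k > \nu\} \subseteq \{k:\ul{z}(k)=0\}$ we get $|\{k: c_k > \nu\}| \le L$. Because $c_i \ge \nu$, the set $\{k : c_k > c_i\}$ is contained in $\{k : c_k > \nu\}$; a short case analysis on whether $c_i > \nu$ (so $i$ itself lies in $\{c_k>\nu\}$ and can be removed) or $c_i = \nu$ (in which case $i\in\hat{S}_L$ with $\ul{z}(i)=1$ would contradict $i$ being among the $L$ smallest entries unless $|\{c_k>\nu\}| \le L-1$) sharpens this to $|\{k : c_k > c_i\}| \le L-1$. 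Consequently at most $L-1$ of the $N-K$ non-defective items have count exceeding $c_i$, leaving at least $(N-K)-(L-1) = N_0$ non-defective items $j$ with $c_j \le c_i$; collecting any $N_0$ of them into a set $S_z \in \mc{S}_z$ yields $\{c_i \ge c_j ~\forall~ j \in S_z\}$, and unioning over the admissible $i$ and $S_z$ produces (\ref{eq:pe0_final}).

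I expect the main obstacle to be precisely this off-by-one in the counting: naively bounding $|\{c_k > c_i\}|$ by $|\{c_k>\nu\}| \le L$ is one item short of the required $N_0$, and closing the gap needs careful handling of the boundary items with $c_k = \nu$ (the fractional/tie regime) and of the tie-breaking used to define $\hat{S}_L$ among equal $\ul{z}$-entries. A secondary detail is the degenerate case $\nu = 0$, where the sum constraint need not be active; this must be dispatched separately, but it corresponds to the benign situation in which sufficiently many items are never tested in negative pools, and can be absorbed into the same bound.
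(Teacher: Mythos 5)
Your proposal is correct and takes essentially the same route as the paper: the paper likewise reduces $\mc{E}$ via Proposition~\ref{rblp_prop1} to the event $\{\ul{\lambda}_2(i)=0\}$ for some $i \in S_d$, uses the stationarity condition (\ref{eq:lp0_kkt_grad}) to conclude $\ul{1}_{M_z}^T \matX_z(:,i) = \ul{\lambda}_1(i) + \nu \ge \nu$, and counts via the budget constraint, where its set $\{k : \ul{\lambda}_1(k) > 0\}$ coincides exactly with your $\{k : c_k > \nu\}$ and its Proposition~\ref{rblp_prop3} ($\theta_0 \le \nu < \theta_1$) encodes your threshold dictionary. The only substantive difference is that your explicit case analysis at $c_i = \nu$ (which requires selecting $i$ as the defective item actually in $\hat{S}_L$, available from the error event together with the first part of the proof of Proposition~\ref{rblp_prop1}, rather than from the bare contrapositive) supplies the off-by-one bookkeeping that the paper passes over when it asserts at least $(N-K)-(L-1)$ non-defective items lie in $\{k : \ul{\lambda}_1(k) = 0\}$.
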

\begin{proof}
  Define $\mc{E}_0(i) \triangleq \{ \ul{\lambda}_2(i) = 0\}$.
We first note, from (\ref{eq:lp0_kkt_grad}), that for any $i \in [N]$
\begin{align} \label{eq:l2eq0implies}
  \ul{\lambda}_2(i) = 0 ~ ~ \Longrightarrow ~ ~\ul{1}_{M_z}^T \matX_z(:,i) 
  = \ul{\lambda}_1(i) + \nu \ge \nu. 
\end{align}
Define $\theta_0 \triangleq \max_{\{i: \ul{\lambda}_1(i) = 0 \}} \ul{1}_{M_z}^T \matX_z(:,i)$
and  $\theta_1 \triangleq \min_{\{i: \ul{\lambda}_1(i) > 0 \}} \ul{1}_{M_z}^T \matX_z(:,i)$.
We relate $\theta_0$, $\theta_1$ and $\nu$ as follows:
\begin{proposition} \label{rblp_prop3} 
  The dual optimal variable $\nu$ satisfies $\theta_0 \le \nu < \theta_1$.
\end{proposition}
\noindent Proof: See Appendix \ref{prf_rblp_prop3}.

From the above proposition and (\ref{eq:l2eq0implies}) it follows that
\begin{align} \label{eq:l2eq0implies_1}
\mc{E}_0(i) \subseteq \left \{ \ul{1}_{M_z}^T \matX_z(:,i) \ge \theta_0 \right \}.
\end{align}
We note that there exists at most $L$ items 
for which $\ul{\lambda}_1(i) > 0$; otherwise the solution would violate the primal feasibility 
constraint:  $\ul{1}_N^T \ul{z}(i) \ge (N-L)$.
Thus, there exist 
  at least $(N-K) - (L-1)$ non-defective items in the set $\{i : \ul{\lambda}_1(i) = 0 \}$. From (\ref{eq:l2eq0implies_1}),  there exists a set $S_z$ of $(N-K)-(L-1)$ non-defective items such that $\left \{ \ul{1}_{M_z}^T \matX_z(:,i) \ge
  \ul{1}_{M_z}^T \matX_z(:,j), \forall j \in S_z \right \}$. Taking the union bound over all possible $S_z$, we get
\begin{align}
  \mc{E}_0(i) &\subseteq \mathop{\cup}_{S_z \in \mc{S}_z} \left \{ \ul{1}_{M_z}^T \matX_z(:,i) \ge
  \ul{1}_{M_z}^T \matX_z(:,j), \forall j \in S_z \right \},
\end{align}
and (\ref{eq:pe0_final}) now follows since using Proposition \ref{rblp_prop1} we have, 
$\mc{E} \subseteq \cup_{i \in S_d} \mc{E}_0(i)$.
\end{proof}
Note that, for a given $i$, the quantity $\ul{1}_{M_z}^T \matX_z(:,i)$ is the same as
the quantity $\mc{T}(i)$ with $\psi_{cb} = 0$ as defined in the proof of Theorem~\ref{thm_A1A2}, and 
(\ref{eq:pe0_final}) is the same as (\ref{eq:row_errevnt}).
Thus, following the same analysis as in Section~\ref{sec_prf_thm_A2}, it follows that,
if $M$ satisfies (\ref{eq:M_cond_noisy_rba1}) with $\psi_0 = 0$, the LP relaxation based algorithm
{\bf RoLpAl} succeeds in recovering $L$ non-defective items with probability exceeding
$1 - \exp\left( -c_0 \log \left [K {N-K \choose L - 1}\right]  \right) 
+ \exp(-c_0 \log K)$.

\subsection{Proof Sketch for Theorem \ref{thm_A4}} \label{sec_prf_thmA4}
We use the same notation as in Theorem \ref{thm_A3} and analyze an
equivalent program that is obtained by replacing $(1-\ul{z})$ by $\ul{z}$.
We note that {\bf LP2} differs from {\bf LP0} only in terms of the objective function, 
and the constraint set remains the same. 
And thus, the complimentary slackness and the primal dual feasibility
conditions are the same as given in (\ref{eq:lp0_kkt_cs}) and (\ref{eq:lp0_kkt_feas}),
respectively. The zero gradient condition for {\bf LP2} is given by:
\begin{align} 
  \label{eq:lp2_kkt_grad}
  \ul{1}_{M_z}^T \matX_z - \psi_{lp} \ul{1}_{M_p}^T \matX_p - \ul{\lambda}_1 + \ul{\lambda}_2 - \nu \ul{1}_N = \ul{0}_{N}.
\end{align}

Let the error event associated with {\bf CoLpAl} be denoted by $\mc{E}$.
Let $i \in S_d$, and define $\mc{E}_i \triangleq \{ i \in \hat{S}_L\}$.
Note that $\mc{E} \subseteq \cup_{i \in S_d} \mc{E}_i$.
Further, it follows that $\mc{E}_i \subseteq \mc{A}_i \cup \mc{B}_i$, where
$\mc{A}_i \triangleq \{ \lambda_2(i) = 0\}$ and 
$\mc{B}_i \triangleq \left \{ \mc{E}_i \cap \{\lambda_2(i) > 0\} \right \}$.
Let us first analyze $\mc{B}_i$. 
Using similar arguments as in Propositions~\ref{rblp_prop1} and \ref{rblp_prop3}, it 
can be shown that,
\begin{align} \label{eq:Bi_upbnd}
  \mc{B}_i \subseteq \{ \nu = 0 \} \subseteq 
  \mathop{\cup}_{ S_z \in \mc{S}_z} \left \{\ul{1}_{M_z}^T \matX_z(:,j) - 
  \psi_{lp} \ul{1}_{M_p}^T \matX_p(:,j) \le 0, \forall j \in S_z \right \},
\end{align}
where $S_z \subset [N]\backslash S_d$ is any set of non-defective items 
such that $|S_z| = (N-K) - (L-1)$ and $\mc{S}_z$ denotes all such sets possible.
Further, using similar arguments as in the proof of 
Theorem \ref{thm_A3}, it can be shown that 
\begin{align} \label{eq:Ai_upbnd}
  \mc{A}_i \subseteq 
  \mathop{\cup}_{ S_z \in \mc{S}_z} \left 
  \{\ul{1}_{M_z}^T \matX_z(:,i) - \psi_{lp} \ul{1}_{M_p}^T \matX_p(:,i) ~  \ge ~
  \ul{1}_{M_z}^T \matX_z(:,j) - \psi_{lp} \ul{1}_{M_p}^T \matX_p(:,j), \forall j \in S_z \right \},
\end{align}
where $S_z$ and $\mc{S}_z$ are as defined above.

The subsequent analysis follows by using the Bernstein inequality to upper 
bound the probability of events $\mc{A}_i$ and $\mc{B}_i$ in a manner similar 
to the previous proofs; we omit the details for the sake of brevity.
Define $\psi_0' \triangleq \min \left( \frac{\gamma_0 \Gamma} {1 - \gamma_0 \Gamma}, \frac{\Gamma}{2 (1 - \Gamma)}\right )$.
Note that, with $\psi_{lp} = \psi_0'$,
$\mb{E}( [\ul{1}_{M_z}^T \matX_z(:,j) - \psi_{lp} \ul{1}_{M_p}^T \matX_p(:,j)]) \ge M p \Gamma/2 > 0$
for any $j \in S_z$. This helps in upper bounding the probability of $\mathcal{B}_i$ using Bernstein's inequality. 
In essence, it can be shown that there exists an absolute constant $C_{4b}>0$, such that
\begin{align} \label{eq:pe_thm4_Bi}
  \mb{P}(\mathop{\cup}_{i \in S_d} \mc{B}_i) \le 
  \exp \lb - \left\{ \frac{M p \Gamma N_0}{C_{4b}} -  
  {\log \left[ K {N-K \choose L -1 } \right ]} \right \} \rb.
\end{align}

Similarly, following the same steps as in the proof of Theorem~\ref{thm_A1A2}, it
can be shown that, for the chosen value of $\psi_{lp}$, there exists an absolute constant
$C_{4a}$ such that
\begin{align} \label{eq:pe_thm4_Ai}
  \mb{P}(\mathop{\cup}_{i \in S_d} \mc{A}_i) \le 
  \exp \lb - \frac{M p \Gamma (1 - \gamma_0)^2 N_0}{C_{4a}} 
  +  {\log \left[ K {N-K \choose L -1 } \right ]} \rb + 
  \exp \lb - \frac{M p \Gamma (1 - \gamma_0)^2 }{C_{4c}} +  {\log K }\rb.
\end{align}
The final result now follows by substituting $p=\frac{1}{(1-u)K}$,
since, by choosing $M$ as in (\ref{eq:M_cond_noisy_rba1}) with $\psi_0 = 0$,
$C_{a1} = \max\{C_{4a}, C_{4b}\}$ and $C_{a2} = C_{4c}$, the
total error in (\ref{eq:pe_thm4_Bi}), (\ref{eq:pe_thm4_Ai}) can be upper bounded
as $2\exp \left( -c_0 \log \lb K {N-K \choose L - 1} \rb \right) + \exp \left( -c_0 \log K \right)$. This concludes the proof.

\section{Simulations} \label{sec:simulations}
In this section, we investigate the empirical performance of the algorithms proposed
in this work for non-defective subset recovery.
In contrast to the previous section, where theoretical guarantees on the number of 
tests were derived based on the analysis of the upper bounds on probability of 
error of these algorithms, here we find the exact number of tests required to achieve a given
performance level, thus highlighting the practical ability of the proposed algorithms
to recover a non-defective subset. This, apart from validating the general theoretical trends, also 
facilitates a direct comparison of the presented algorithms. 

Our setup is as follows. For a given set of operating parameters,
i.e., $N$, $K$, $u$, $q$ and $M$, we choose a defective set $S_d \subset [N]$ randomly such that
$|S_d| = K$ and generate the test output vector $\ul{y}$ according
to (\ref{eq:gtmodel}). We then recover a subset of $L$ non-defective items using the different recovery
algorithms, i.e., {\bf RoAl}, {\bf CoAl}, {\bf RoLpAl}, {\bf RoLpAl++} and {\bf CoLpAl}, and compare it with
the defective set. 
The empirical probability of error is set equal to the fraction of the
trials for which the recovery was not successful, i.e., the output non-defective subset 
contained at least one defective item. 
This experiment is repeated for different values of $M$ and $L$.
For each trial, the test matrix $\matX$ is generated with random  
Bernoulli i.i.d.\ entries, i.e., $\matX_{ij} \sim \mc{B}(p)$,
where $p = 1/K$.
Also, for {\bf CoAl} and {\bf CoLpAl}, we set $\psi_{cb} = \frac{\gamma_0\Gamma }{1 - \gamma_0 \Gamma}$
and $\psi_{lp} = \min \lb \frac{\gamma_0\Gamma }{1 - \gamma_0 \Gamma}, \frac{\Gamma}{2 (1 - \Gamma)} \rb$, 
respectively.
Unless otherwise stated, we set $N=256$, $K=16$, $u=0.05$, $q=0.1$ and we vary
$L$ and $M$.

Figure~\ref{figure:pe_vs_M} shows the variation of the empirical probability of error with the
number of tests, for $L=64$ and $L=128$. 
These curves demonstrate the theoretically expected exponential 
behavior of the average error rates, the similarity of the error rate 
performance of algorithms {\bf RoAl} and {\bf RoLpAl}, 
and the performance improvement offered by {\bf RoLpAl++} at higher values of $L$.
We also note that, as expected, the algorithms that use tests with both 
positive and negative outcomes perform better than the algorithms that use 
only tests with negative outcomes.

Figure~\ref{figure:M_vs_L} presents the number of tests $M$ required to achieve a 
target error rate of $10\%$ as a function of the {size} of the non-defective subset, $L$.
We note that for small values of $L$, the algorithms perform similarly,
but, in general, {\bf CoAl} and {\bf CoLpAl}  are the best performing algorithms
across all values of $L$.
We also note that, as argued in Section~\ref{lin_alg}, {\bf RoLpAl++} performs 
similar to {\bf RoLpAl} for small values of $L$ and for large values of $L$ 
the performance of the former is the same as that of {\bf CoLpAl}.
Also, as mentioned in Section~\ref{sec_alg_disc}, we note the linear increase in
$M$ with $L$, especially for small values of $L$.
We also compare the algorithms proposed in this work with
an algorithm that identifies the non-defective items by first identifying 
the defective items, i.e., we compare the ``direct'' and ``indirect'' 
approach~\cite{AC13} of identifying a non-defective subset.
We first employ a defective set recovery algorithm for
identifying the defective set and then choose $L$ items uniformly 
at random from the complement set. This algorithm is referred to as
``InDirAl'' in Figure~\ref{figure:M_vs_L}.
In particular, we have used ``{\bf No-LiPo-}'' algorithm\cite{jaggi_gtalgo} for 
defective set identification.
It can be easily seen that the ``direct'' approach significantly outperforms the
``indirect'' approach.
We also compare against a non-adaptive 
scheme that tests items one-by-one.  
The item to be tested in each test is chosen uniformly at random from the population.  
We choose the top $L$ items tested in all the tests with negative outcomes
as the non-defective subset. This algorithm is referred to at ``NA1by1'' (Non-Adaptive
$1$-by-$1$) in Figure~\ref{figure:M_vs_L}. It is easy to see that the group testing
based algorithms significantly outperform the NA1by1 strategy.

Figure~\ref{figure:M_vs_L_lbnd_comp} compares the number of tests required
to achieve a target error rate of $10\%$ for {\bf CoAl} with the 
information theoretic lower bound for two different values of $K$.\footnote{We refer the
reader to Theorem $3$ and Section IV in \cite{AC13} for a detailed 
discussion on the information theoretic lower bound.  
Also, see equations (7) and (9) in  \cite{johnson_gtbp} for the derivation of the mutual information 
term that is required for computing the lower bound for the group testing signal model.} 
It can be seen that the 
empirical performance of {\bf CoAl} is within $O(\log K)$ 
of the lower bound. The performance of the other algorithms is found to obey a similar behavior. 


As discussed in Section~\ref{sec_alg_disc}, the parameter settings 
require the knowledge of $K$.
Here, we investigate the sensitivity of the algorithms on the test matrix designed
assuming a nominal value of $K$ to mismatches in its value.
Let the true number of defective items be $K_t$. Let $M(\hat{K}, K_t)$
denote the number of tests required to achieve a given error rate when the test
is designed with $K = \hat{K}$. Let $\Delta_M(\hat{K}, K_t) \triangleq 
\frac{M(\hat{K},K_t)}{M(K_t,K_t)}$. Thus, $\Delta_M(\hat{K}, K_t)$ represents
the penalty paid compared to the case when the test is designed knowing the
number of defective items.
Table~\ref{tab:robust_K} shows the empirically computed $\Delta_M$ for 
different values of uncertainty factor $\Delta_K \triangleq \frac{\hat{K}}{K_t}$ 
for the different algorithms.
We see that the algorithms exhibit robustness to the uncertainty in 
the knowledge of $K$. For example, even when $\hat{K} = 2 K_t$,
i.e., $\Delta_k = 2$, we only pay a penalty of approximately $17\%$ for
most of the algorithms. Also, as suggested by the analysis of the  
upper bounds in Section~\ref{sec_alg_disc}, the 
algorithms exhibit asymmetric behavior in terms of robustness and are more 
robust for $\Delta_k > 1$ compared to when $\Delta_k < 1$.

Figure~\ref{figure:pe_vs_q_u} shows the performance of different algorithms
with the variations in the system noise parameters.
Again, in agreement with the analysis of the probability of error, 
the algorithms perform similarly with respect to variations in both the additive 
and dilution noise.

\begin{figure}[t]
\centering
\includegraphics[scale=0.8]{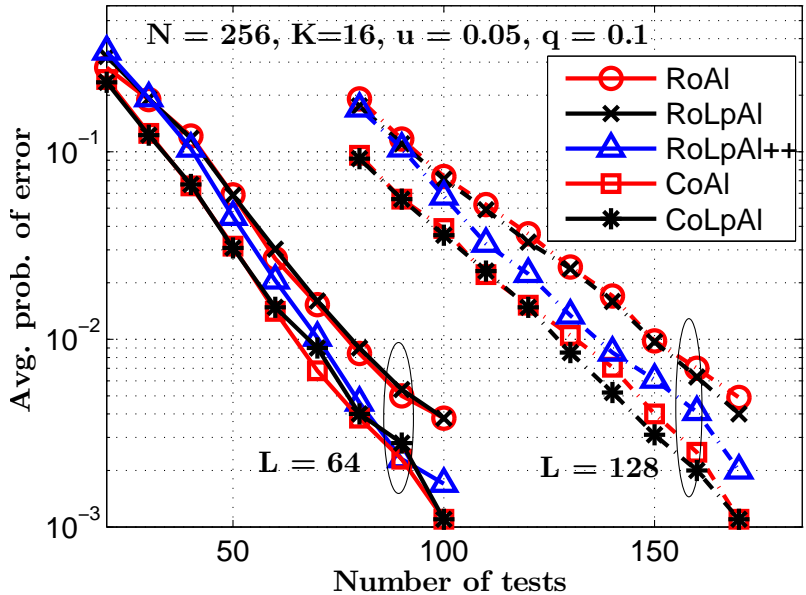}
\caption{Average probability of error (APER) \emph{vs.} number of tests $M$.
The APER decays exponentially with $M$.}
\label{figure:pe_vs_M}
\end{figure}

\begin{figure}[t]
\centering
\includegraphics[scale=0.8]{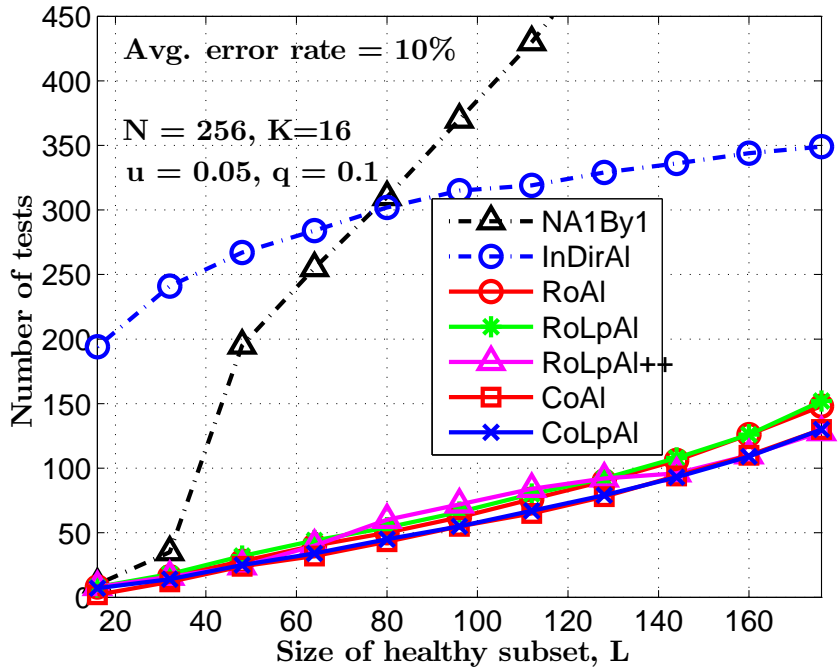}
\caption{Number of tests \emph{vs.} size of non-defective subset. Algorithm {\bf CoLpAl} performs the
best among the ones considered. The direct approach for finding non-defective items significantly 
outperforms both the indirect
approach (``InDirAl''), where defective items are identified first and the non-defective items 
are subsequently chosen from the complement set \cite{AC13}, as well as the item-by-item testing approach (``NA1By1'').}
\label{figure:M_vs_L}
\end{figure}


\begin{figure}[t]
\centering
\includegraphics[scale=0.8]{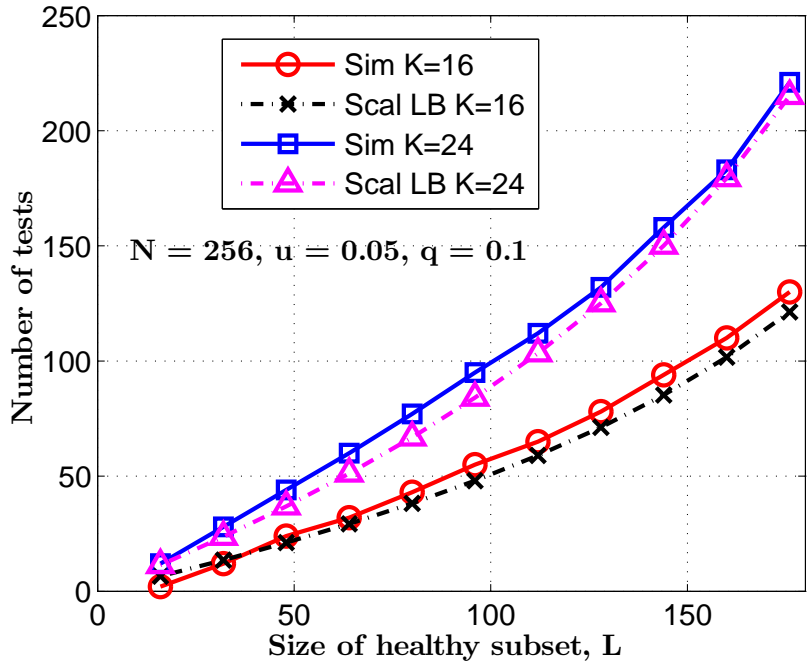}
\caption{Comparison of {\bf CoAl} with the \emph{scaled} information theoretic 
lower bounds. Here, the lower bounds have been scaled by a multiplicative factor of $\log(K)$.
The close agreement of the scaled lower bound with the performance of the algorithm shows that {\bf CoAl} is within a $\log(K)$ factor of the lower bounds.} 
\label{figure:M_vs_L_lbnd_comp}
\end{figure}

%

\begin{figure}[t]
\centering
\subfigure[]{
\includegraphics[scale=0.5]{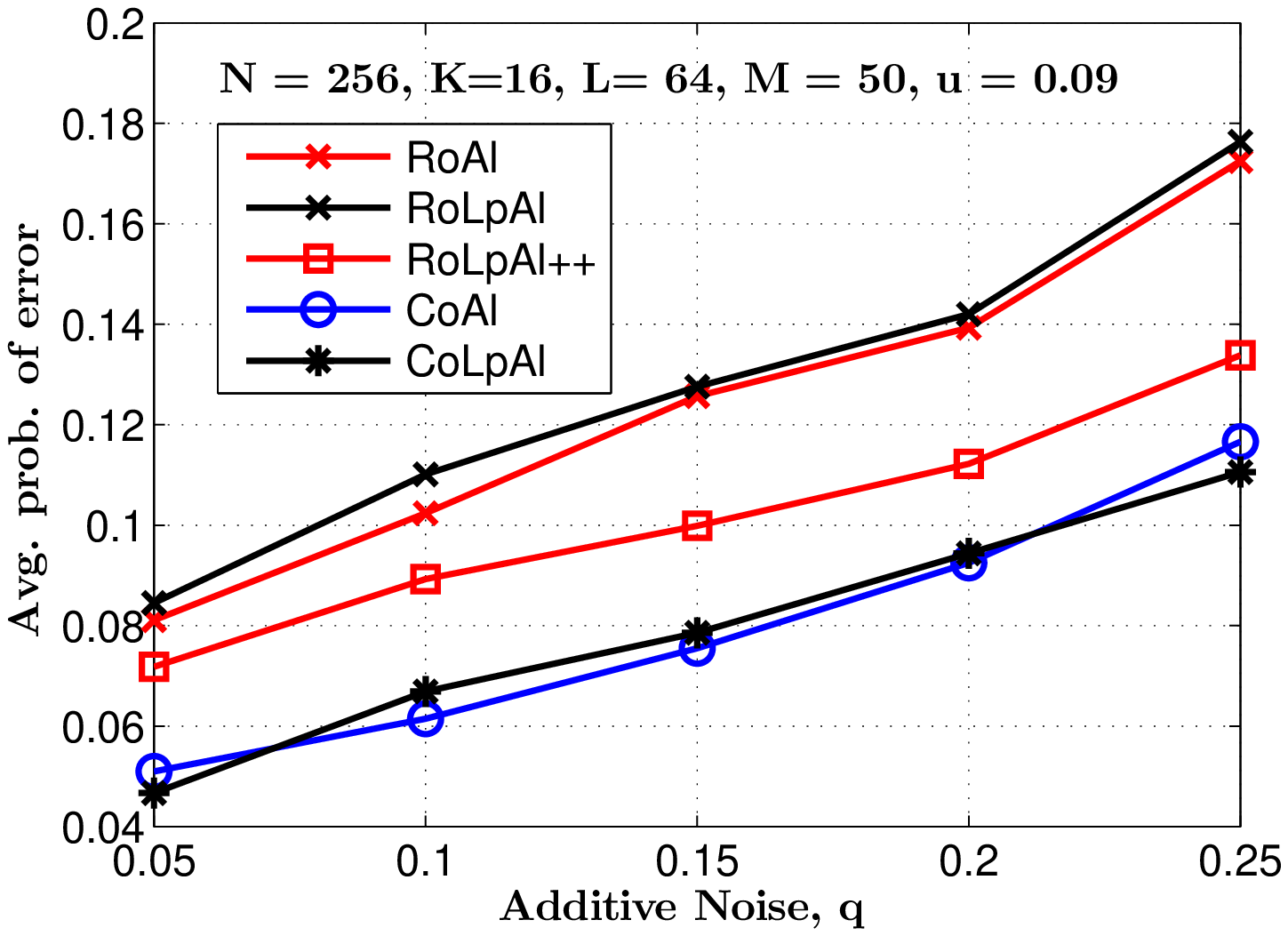}
}
\subfigure[]{
\includegraphics[scale=0.5]{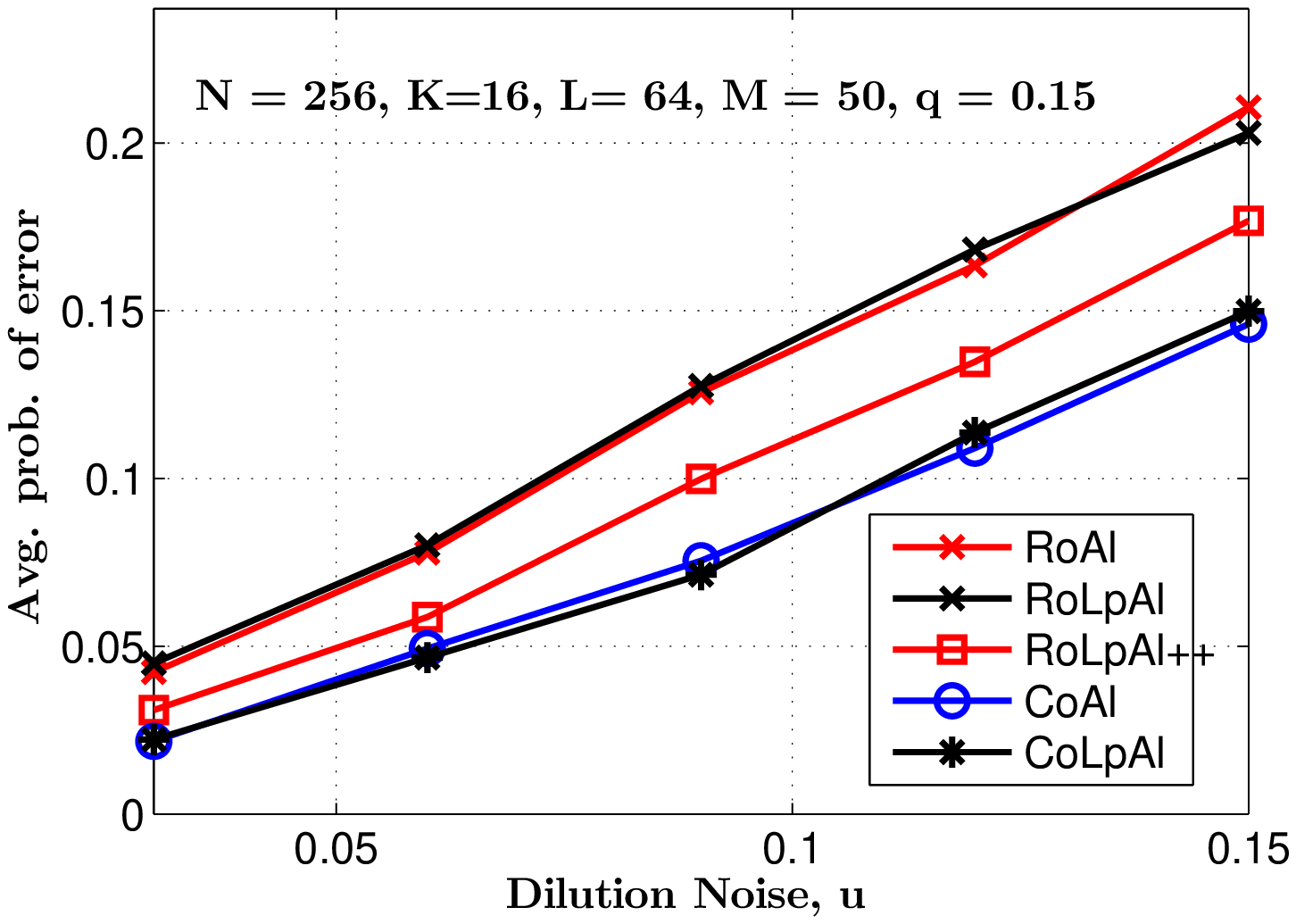}
}
\caption{Variation of the average probability with (a) additive noise ($q$) and (b) dilution noise ($u$).
}
\label{figure:pe_vs_q_u}
\end{figure}


\begin{table}[t]
  \centering
  \caption{Robustness of the non-defective subset identification algorithms to uncertainty in the knowledge of~$K$. The numbers in the
  table are~$\Delta_M(\hat{K}, K_t)$.}
  \begin{tabular}{|l|c|c|c|} \hline 
    \multicolumn{4}{|c|}{$K_t = 16$, $N=256$, $L=128$, $q = 0.1$, $u=0.05$} \\ \hline
    & $\Delta_K = 0.75$ & $\Delta_K = 1.5$ & $\Delta_K = 2.0$  \\ \hline
    {\bf RoAl}  & $1.13$ & $1.06$ & $1.20$  \\ \hline
    {\bf CoAl}  & $1.13$ & $1.04$ & $1.17$  \\ \hline
    {\bf RoLpAl}  & $1.09$ & $1.04$ & $1.17$  \\ \hline
    {\bf RoLpAl++}  & $1.04$ & $1.00$ & $1.17$  \\ \hline
    {\bf CoLpAl}  & $1.11$ & $1.03$ & $1.19$  \\ \hline
  \end{tabular}
  \label{tab:robust_K}
\end{table}

\section{Conclusions} \label{sec_conclusions}

In this work, we have proposed analytically tractable and computationally efficient algorithms 
for identifying a non-defective subset of a given size in a noisy non-adaptive group testing setup.
We have derived upper bounds on the number of tests for guaranteed correct subset 
identification and we have shown that 
the upper bounds and information theoretic lower bounds are order-wise tight
up to a poly-log factor. We have shown that the algorithms are robust to the 
uncertainty in the knowledge of system parameters.
Also, it was found that the algorithms that use both positive and negative outcomes, namely
{\bf CoAl} and the LP relaxation based {\bf CoLpAl},
gave the best performance for a wide range of values of $L$, the size of non-defective
subset to be identified.
In this work, we have considered the randomized pooling strategy. It will be interesting
to study deterministic constructions for the purpose of non-defective subset identification;
this could be considered in a future extension of this work. 
Another interesting question to investigate is to extend the non-defective subset identification problem 
to scenarios with structured pooling strategies, e.g., for graph constrained group 
testing where the pools are constrained by the nodes that lie on a path of a given graph.

\appendix
\subsection{Proof of Lemma \ref{sig_mod_facts}} \label{prf_sig_mod_facts}
We note that a test outcome is $0$ only if none of the $K$ defective items participate
in the test and the output is not corrupted by the additive noise. (a) now follows
by noting that the probability that an item does not participate in the group test
is given by $(1-p) + pu$. (b) follows from (\ref{eq:gtmodel}). For (c) we note that,
given that $X_{li} = 1$ for any $i \in S_d$, the outcome is $0$ only if the $i^{\text{th}}$ item does
not participate in the test (despite $X_{li} = 1$) \emph{and} none of the 
remaining $K-1$ defective items participate in the test (either the entry of the test matrix is zero or the  item gets diluted out by noise) \emph{and} the test outcome is not
corrupted by additive noise. That is, $\mb{P}(Y_l = 0 | X_{li} = 1) = 
u (1 - (1-p)u)^{K-1} (1-q) = \gamma_0 \Gamma$.
The other part follows similarly. 
(d) follows by noting that for any $i \in S_d$ and $j \notin S_d$, 
$\mb{P}(Y_l |  X_{li}, X_{lj}) = \mb{P}(Y_l |  X_{li})$. By Bayes rule and part (b) in 
this lemma, we get:
$\mb{P}(X_{li}, X_{lj} | Y_l) = 
\frac{\mb{P}(Y_l |  X_{li}, X_{lj})}{\mb{P}(Y_l)}  \mb{P}(X_{li}) \mb{P}(X_{lj})
= \mb{P}(X_{li}|Y_l) \mb{P}(X_{lj}|Y_l)$. Hence the proof.

\subsection{Proof of (\ref{eq:mean_ij}) and (\ref{eq:var_j}) } \label{prf_Z0Z1_stats}
For $i \in S_d$ and $j \notin S_d$, let $\ul{x}_j(l) \triangleq X_{jl}$, $\ul{x}_i(l) \triangleq X_{il}$ 
and $\ul{y}(l) \triangleq Y_l$.
For any $k \in [N]$, we note that $\mc{T}(k)$ can be written as a sum of $M$ independent random variables $\sum_{l=1}^M Z_{kl}$, 
where $Z_{kl}$ takes value $1$
with probability $\mb{P}( X_{kl} = 1, Y_l = 0)$,  
$-\psi_{cb}$ with probability $\mb{P}( X_{kl} = 1, Y_l = 1)$, and takes the value $0$ otherwise.
From Lemma~\ref{sig_mod_facts}, we know that
$P(Y_l=0|X_{il} = 1) = \gamma_0 \Gamma$ and $P(Y_l=0|X_{jl} = 1) = \Gamma$ and thus
(\ref{eq:mean_ij}) follows.
Further, (\ref{eq:var_j}) follows by noting that
\begin{align} \nonumber
  \text{Var}(Z_{jl}) \le \mb{E}(Z_{jl}^2) &= p \lb \Gamma + \psi_{cb}^2 (1-\Gamma) \rb\\
\nonumber
\text{Var}(Z_{il}) \le \mb{E}(Z_{il}^2)  &=  p \lb \gamma_0 \Gamma + \psi_{cb}^2 (1- \gamma_0 \Gamma) \rb.
\end{align}
\subsection{Proof of Proposition \ref{rblp_prop1}} \label{prf_rblp_prop1}
We first prove that, for all $i \in \hat{S}_L$, 
$\ul{\lambda}_2(i) = 0$. The proof is based on contradiction. 
Suppose $\exists\ j \in \hat{S}_L$ such that $\ul{\lambda}_2(j) > 0$.
This implies, from the complimentary slackness conditions (\ref{eq:lp0_kkt_cs}), $\ul{z}(j) = 1$ and thus,
$\ul{\lambda}_1(j) = 0$. Since $j^{\text{th}}$ item is amongst the smallest $L$ entries, this implies that 
$\ul{1}_N^T \ul{z} > (N-L)$. Hence, $\nu = 0$. From the zero gradient condition in (\ref{eq:lp0_kkt_grad}), it
follows that $\ul{1}_{M_z}^T \matX_z(:,j) = -\ul{\lambda}_2(j) < 0$, which is not possible, as 
all entries in $\matX$ are nonnegative. It then follows that 
$\forall ~i \in \hat{S}_L$ $\ul{\lambda}_2(i) = 0$. Thus, if 
$\ul{\lambda}_2(i) > 0 ~ \forall ~ i \in S_d$, then these items cannot belong to the first
$L$ entries in the primal solution $\ul{z}$, i.e., $S_d \cap \hat{S}_L = \{ \emptyset \}$.

\subsection{Proof of Proposition \ref{rblp_prop3}} \label{prf_rblp_prop3}
  Suppose $\nu < \theta_0$. Then $\exists~i$ such that $\ul{\lambda}_1(i) = 0$ and 
  $\nu < \ul{1}_{M_z}^T \matX_z(:,i)$. Thus, from (\ref{eq:lp0_kkt_grad}), 
  $\ul{\lambda}_2(i) = \nu - \ul{1}_{M_z}^T \matX_z(:,i) < 0$, which violates the
  dual feasibility conditions (\ref{eq:lp0_kkt_feas}). Thus, $\nu \ge \theta_0$.
  Similarly, let $\nu \ge \theta_1$. Then $\exists~i$ such that $\ul{\lambda}_1(i) = 1$ and 
  $\nu \ge \ul{1}_{M_z}^T \matX_z(:,i)$. Thus, from (\ref{eq:lp0_kkt_grad}), 
  $\ul{\lambda}_2(i) = \ul{\lambda}_1(i) + \nu - \ul{1}_{M_z}^T \matX_z(:,i) \ge 1$, 
  which is a contradiction since $\ul{\lambda}_1(i) > 0$ implies $\ul{\lambda}_2(i) = 0$.
  Thus, $\nu \ge \theta_1$ is not possible.

  \subsection{Affine characterization of the function $\frac{H_b(\alpha)}{1-\alpha}$} \label{app_gamma_approx}
\begin{lemma} \label{lemma_Gamma_approx}
  Let $H_b(\cdot)$ represent the binary entropy function.
  Then, for $0 < \alpha \le \alpha_h < 1$, there exist positive absolute constants 
  $c_0, c_1 > 0$, with $c_1$ depending on $\alpha_h$, such that
  \begin{equation} \label{eq:Gamma_lin}
    \frac{H_b(\alpha)}{1-\alpha} \le c_0 \alpha + c_1.
 \end{equation}
\end{lemma}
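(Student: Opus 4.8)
The cleanest route is to show that $g(\alpha)\triangleq\frac{H_b(\alpha)}{1-\alpha}$ is increasing on $(0,1)$, so that on $(0,\alpha_h]$ it is controlled by its value at the right endpoint. I would compute the derivative directly: using $H_b'(\alpha)=\log\frac{1-\alpha}{\alpha}$, the numerator of $g'(\alpha)$ simplifies pleasantly, since $H_b'(\alpha)(1-\alpha)+H_b(\alpha) = -(1-\alpha)\log\alpha-\alpha\log\alpha = -\log\alpha$, giving $g'(\alpha)=\frac{-\log\alpha}{(1-\alpha)^2}$. As $-\log\alpha>0$ for $\alpha\in(0,1)$, the map $g$ is strictly increasing, and hence $g(\alpha)\le g(\alpha_h)$ for all $\alpha\in(0,\alpha_h]$. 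Taking $c_1\triangleq g(\alpha_h)$ (a finite constant precisely because $\alpha_h<1$) and any absolute constant $c_0>0$, the claimed bound $g(\alpha)\le c_0\alpha+c_1$ follows immediately, with $c_0$ independent of $\alpha_h$ and $c_1$ depending on $\alpha_h$, exactly as stated.

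If instead one wants an affine majorant that genuinely tracks $g$ (rather than carrying the slope as slack), I would split $g(\alpha)=-\frac{\alpha\log\alpha}{1-\alpha}-\log(1-\alpha)$ and bound each term on $(0,\alpha_h]$. The factor $\frac{1}{1-\alpha}$ is bounded by $\frac{1}{1-\alpha_h}$; the function $-\alpha\log\alpha$ is nonnegative, concave (its second derivative is $-1/\alpha<0$) and bounded by $e^{-1}$, so it admits an affine upper bound via any of its tangent lines; and $-\log(1-\alpha)\le\frac{\alpha}{1-\alpha}\le\frac{\alpha}{1-\alpha_h}$ is itself linear in $\alpha$. Collecting terms again produces a bound of the form $c_0\alpha+c_1$.

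The one delicate point in either approach is the behaviour of $-\alpha\log\alpha$ near the origin: its derivative $-\log\alpha-1$ diverges as $\alpha\to0^+$, so there is no Lipschitz/linear control through the origin, yet the function itself remains bounded, which is exactly why a strictly positive intercept $c_1$ is unavoidable. The genuine obstacle, however, is conceptual and dictates the hypotheses: since $g(\alpha)\sim-\log(1-\alpha)\to\infty$ as $\alpha\to1^-$, no affine (indeed no bounded) majorant can exist on all of $(0,1)$. This forces the restriction $\alpha\le\alpha_h<1$ and explains why $c_1$ must be permitted to grow (like $\frac{1}{1-\alpha_h}$, equivalently like $g(\alpha_h)$) as $\alpha_h\uparrow1$, whereas the slope $c_0$ can be kept absolute.
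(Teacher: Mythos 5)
Your proof is correct, but it takes a genuinely different route from the paper's. You establish that $g(\alpha)=H_b(\alpha)/(1-\alpha)$ is strictly increasing via the identity $g'(\alpha)=\frac{-\log\alpha}{(1-\alpha)^2}$ (your algebra checks out), then set $c_1=g(\alpha_h)$, making the slope $c_0$ pure slack. The paper instead expands both logarithms as power series, $-\log\alpha=\sum_{i\ge1}(1-\alpha)^i/i$ and $-\log(1-\alpha)=\sum_{i\ge1}\alpha^i/i$, truncates each after three terms and sums the geometric tails in closed form, arriving at $g(\alpha)\le\frac{17}{6}\alpha+\frac{1}{4}\left[(1-\alpha)^3+\frac{\alpha^4}{1-\alpha}\right]$, hence the explicit absolute slope $c_0=17/6$ and, e.g., $c_1=1/4$ for $\alpha_h\le 1/2$. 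Both arguments prove the statement as written. What the paper's computation buys is a majorant that genuinely tracks $g$ for small $\alpha$ --- absolute slope and small intercept --- which is what the lemma is invoked for in the discussion section, namely that the sufficient number of tests grows approximately linearly in $L$; your endpoint bound $g(\alpha)\le g(\alpha_h)$ is looser in that regime (e.g.\ it gives intercept $g(1/2)=2\log 2\approx 1.39$ where the paper has $1/4$). What your argument buys is brevity and structural insight: the monotonicity identity, the observation that a positive intercept is unavoidable because $-\alpha\log\alpha$ has infinite slope at the origin, and the observation that $\alpha_h<1$ is forced by the blow-up of $g$ at $1$ are all exactly right and absent from the paper. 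One small caveat: in your second decomposition the slope comes out as $1/(1-\alpha_h)$, i.e.\ $\alpha_h$-dependent, so your closing claim that the slope ``can be kept absolute'' is delivered only by your first (trivializing) argument and by the paper's series expansion, not by that decomposition as sketched.
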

  To exablish (\ref{eq:Gamma_lin}), we note that
  \begin{align}
    \nonumber
    \frac{H_b(\alpha)}{1-\alpha} &= -\frac{\alpha}{1-\alpha} \log(\alpha) - \log(1 - \alpha)
    = \frac{\alpha}{1-\alpha}\sum_{i=1}^{\infty} \frac{(1-\alpha)^i}{i} + 
    \sum_{i=1}^{\infty} \frac{\alpha^i}{i}\\
    \nonumber
    &\le \alpha \left(1 + \frac{(1-\alpha)}{2} + \frac{(1-\alpha)^2}{3} \right) + 
    \frac{\alpha (1 - \alpha)^3}{4} \left(\sum_{i=1}^{\infty} (1-\alpha)^{i-1} \right)\\
    \nonumber
    & ~ ~ ~ + \alpha + \frac{\alpha^2}{2} + \frac{\alpha^3}{3} + \frac{\alpha^4}{4} 
    \left(\sum_{i=1}^{\infty} \alpha^{i-1} \right)\\
    \nonumber
    &\le \frac{17}{6} \alpha + \frac{1}{4}\left [(1-\alpha)^3 + \frac{\alpha^4}{1 - \alpha} \right]
    \le c_0 \alpha + c_1,
  \end{align}
  where $c_0 = 17/6$ and $c_1$ is obtained by appropriately bounding the second term when
  $0 \le \alpha \le \alpha_h$.
  In particular, for $\alpha_h \le 0.5$, $c_1 = 0.25$ will satisfy (\ref{eq:Gamma_lin}).

\subsection{Discussion on the theoretical guarantees for {\bf RoLpAl++}} \label{sec_prf_a3pp}
The discussion for {\bf RoLpAl++} proceeds on similar lines as {\bf RoLpAl}.
We use the same notation as in Section \ref{sec_prf_a3}, and, 
as before, we analyze an equivalent LP obtained by 
eliminating the equality constraints and substituting $(1-\ul{z})$ by $\ul{z}$.
The corresponding KKT conditions for a pair of primal and dual optimal points are as follows:
\begin{align} 
  \label{eq:lp1_kkt_grad}
  & \ul{1}_{M_z}^T \matX_z - \ul{\mu}^T \matX_p - \ul{\lambda}_1 + \ul{\lambda}_2 - \nu \ul{1}_N = \ul{0}_{N} \\
  \label{eq:lp1_kkt_cs}
  & \ul{\mu} \circ (\matX_p \ul{z} - (1-\epsilon_0) \ul{1}_{M_p}) = \ul{0}_{M_p}; ~
  \ul{\lambda}_1 \circ \ul{z} = \ul{0}_N; ~ \ul{\lambda}_2 \circ (\ul{z} - \ul{1}_N) = \ul{0}_N; ~
  \nu (\ul{1}_N^T \ul{z} - (N-L)) = 0;  \\
  \label{eq:lp1_kkt_feas}
  & \ul{0}_{N} \preccurlyeq \ul{z} \preccurlyeq \ul{1}_{N};~
  \ul{1}_{N}^T \ul{z} \ge (N-L);~
  \ul{\mu} \succcurlyeq \ul{0}_{M_p}; ~
  \ul{\lambda}_1 \succcurlyeq \ul{0}_N; ~
  \ul{\lambda}_2 \succcurlyeq \ul{0}_N; ~
  \nu \ge 0; 
\end{align}
In the above, $\ul{\mu} \in \mathbb{R}^{M_p}$ is the dual variable associated with constraint 
(\ref{eq:lp1_p_1}) of {\bf LP1}.
Let $(\ul{z}, \ul{\mu}, \ul{\lambda}_1, \ul{\lambda}_2, \nu)$ be a primal, dual optimal point
satisfying the above equations.
We first prove the following: 
\begin{proposition}
  If $\lambda_2(i) > 0$, then $\ul{\mu}^T \matX_p(:, i) = 0$.
\end{proposition}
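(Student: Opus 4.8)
The plan is to prove this using only the complementary slackness conditions in (\ref{eq:lp1_kkt_cs}), together with the nonnegativity of the primal variable $\ul{z}$ and of the boolean matrix $\matX_p$; the stationarity condition (\ref{eq:lp1_kkt_grad}) is not needed. This is the analogue, for {\bf LP1}, of the first half of Proposition~\ref{rblp_prop1}, but the argument here is direct rather than by contradiction, and it additionally tracks the dual variable $\ul{\mu}$ associated with the positive-pool constraints (\ref{eq:lp1_p_1}).

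First I would invoke the hypothesis $\ul{\lambda}_2(i) > 0$ in the complementary slackness relation $\ul{\lambda}_2 \circ (\ul{z} - \ul{1}_N) = \ul{0}_N$ to conclude that $\ul{z}(i) = 1$. Next I would consider any positive pool $l \in Y_p$ in which item $i$ participates, i.e., any $l$ with $\matX_p(l,i) = 1$. Since every entry of $\ul{z}$ is nonnegative by primal feasibility in (\ref{eq:lp1_kkt_feas}) and every entry of $\matX_p$ lies in $\{0,1\}$, the corresponding constraint value obeys $\matX_p(l,:)\,\ul{z} = \sum_{k} \matX_p(l,k)\,\ul{z}(k) \ge \matX_p(l,i)\,\ul{z}(i) = 1$. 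Because $0 < \epsilon_0 \ll 1$ gives $1 > 1 - \epsilon_0$, the positive-pool constraint (\ref{eq:lp1_p_1}) is \emph{strictly} inactive at this $l$, that is $\matX_p(l,:)\,\ul{z} - (1-\epsilon_0) > 0$. The complementary slackness condition $\ul{\mu} \circ (\matX_p \ul{z} - (1-\epsilon_0)\ul{1}_{M_p}) = \ul{0}_{M_p}$ then forces $\ul{\mu}(l) = 0$ for every such $l$.

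Finally I would assemble these observations by expanding $\ul{\mu}^T \matX_p(:,i) = \sum_{l=1}^{M_p} \ul{\mu}(l)\,\matX_p(l,i)$: each term with $\matX_p(l,i) = 0$ vanishes trivially, while each term with $\matX_p(l,i) = 1$ vanishes because $\ul{\mu}(l) = 0$ by the previous step, yielding $\ul{\mu}^T \matX_p(:,i) = 0$. There is no real obstacle here, as the proof is a short chain of complementary-slackness deductions; the only point deserving care is the bookkeeping from the change of variables $(\ul{1}_N - \ul{z}) \mapsto \ul{z}$ used to pass to the equivalent program, under which $\ul{z}(i) = 1$ encodes item $i$ being declared defective, and the verification that it is the \emph{strict} inequality $1 - \epsilon_0 < 1$ (rather than a non-strict one) that drives the slackness and hence genuinely kills $\ul{\mu}(l)$.
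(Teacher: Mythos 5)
Your proof is correct and follows essentially the same route as the paper's: use $\ul{\lambda}_2(i) > 0$ with complementary slackness to get $\ul{z}(i) = 1$, observe that any positive pool $l$ with $\matX_p(l,i) = 1$ then satisfies $\matX_p(l,:)\,\ul{z} \ge 1 > 1 - \epsilon_0$ so that $\ul{\mu}(l) = 0$, and sum over $l$. Your write-up is merely more explicit about the nonnegativity bookkeeping and the change of variables $(\ul{1}_N - \ul{z}) \mapsto \ul{z}$, which the paper leaves implicit.
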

\begin{proof}
  For any $l \in [M_p]$, if $\matX_p(l, i) = 0$ then $\ul{\mu}(l) \matX_p(l, i) = 0$. 
  If $\matX_p(l, i) = 1$, then for the $l^{\text{th}}$ test 
  $\matX_p(l, :) \ul{z} \ge 1 > (1 - \epsilon_0)$, since $\lambda_2(i) > 0$ implies $z(i) = 1$.
  This implies $\ul{\mu}(l) = 0$, and thus $\ul{\mu}(l) \matX_p(l, i) = 0$. Hence the proposition follows.
\end{proof}

Using the above, it is easy to see that Proposition \ref{rblp_prop1}
holds in this case also. Furthermore, using the same arguments as 
in Section \ref{sec_prf_a3}, it can be shown that the error event associated with
{\bf RoLpAl++}, $\mc{E}$, satisfies 
$\mc{E} \subseteq \mathop{\cup}_{i \in S_d} \mathop{\cup}_{ S_z \in \mc{S}_z} 
\left \{ \cap_{j \in S_z} \mc{E}_{0}(i,j) \right \}$, where
\begin{align} \label{eq:pe_a3pp}
 \mc{E}_0(i, j) = \{\ul{1}_{M_z}^T \matX_z(:,i) - \ul{\mu}^T \matX_p(:,i) -\lambda_1(i)  \ge 
\ul{1}_{M_z}^T \matX_z(:,j) - \ul{\mu}^T \matX_p(:,j) \},
\end{align}
with the notation for $S_z$ and $\mc{S}_z$ as defined at the beginning of Section~\ref{sec_thm_proofs}.
In the following discussion, since $i$ is fixed, for notational simplicity
we will use $\mc{E}_0(j) \triangleq \mc{E}_0(i,j)$.
Note that, for {\bf RoLpAl}, the error event is upper bounded by a similar 
expression as the above but with $\mc{E}_0(j)$ replaced by 
$\mc{E}_1(j) \triangleq \{\ul{1}_{M_z}^T \matX_z(:,i) \ge \ul{1}_{M_z}^T \matX_z(:,j) \}$.
%
In order to analytically compare the performances of
{\bf RoLpAl} and {\bf RoLpAl++}, we try to relate the
events $\mc{E}_0(j)$ and $\mc{E}_1(j)$.
Note that if $\mc{E}_0(j) \subseteq \mc{E}_1(j)$, then $\mb{P}(\mc{E}_0(j)) \le \mb{P}(\mc{E}_1(j))$,
and hence, {\bf RoLpAl++} would outperform {\bf RoLpAl}.
Now, when $\ul{\mu} = \ul{0}_{M_z}$, $\mc{E}_0(j) \subseteq \mc{E}_1(j), ~ \forall j \in S_z$.
For $\ul{\mu} \neq \ul{0}$,
we divide the items in $S_z$ into two disjoint groups:

\noindent (a)  $\lambda_2(j) > 0$: Since $\ul{\mu}^T \matX_p(:, j) = 0$, 
$\ul{\mu}^T \matX_p(:, i) \ge 0$ and $\lambda_1(i) \ge 0$, it follows that
$\mc{E}_0 \subseteq \mc{E}_1$.

\noindent (b) $\lambda_2(j) = 0$: 
We note that $\mc{E}_0(j) \subseteq \mc{E}_1(j) \cup \mc{E}_1'(j)$ where 
$\mc{E}_1'(j) = \left \{ \ul{\mu}^T \left [ 
\matX_p(:,j) - \matX_p(:,i) \right] \ge \kappa + \lambda_1(i) \right \}$,
where $\kappa + \lambda_1(i) > 0$. 

A technical problem, which does not allows us to state the
categorical performance result, arises now.
It is difficult to obtain the estimates for the dual variables $\ul{\mu}$ and hence 
of $\mb{P}(\mc{E}_1'(j))$. Therefore, we offer 
two intuitive arguments that provide insight into the relative 
performance of {\bf RoLpAl++} and {\bf RoLpAl}.
The first argument is that the \emph{majority} of the items 
in $S_z$ will have $\lambda_2(j) > 0$ and 
thus, for a majority items in $S_z$, it follows that 
$\mb{P}(\mc{E}_0(j)) \le \mb{P}(\mc{E}_1(j))$.
This is because the set $\{ j: \lambda_2(j) = 0 \}$ is given by,
\begin{align}
  \left \{ j : 
    \left ( \ul{1}_{M_z}^T \matX_z(:,j) - \ul{\mu}^T \matX_p(:,j) \right )
    = \mathop{\max}_{\{l: \lambda_1(l) = 0 \}} 
    \left ( \ul{1}_{M_z}^T \matX_z(:,l) - \ul{\mu}^T \matX_p(:,l) \right ) \right \},
\end{align}
and, as the number of tests increase and the number of nonzero components of $\ul{\mu}$ 
increase, the probability that above equality holds becomes smaller and smaller.
Furthermore, for a small number of items $j \in S_z$ with $\lambda_2(j) = 0$, it is 
reasonable to expect that $\mb{P}(\mc{E}_1'(j))$ will be small. 
This is because the probability
that a defective item is tested in a pool with positive outcome is higher that
the probability that a non-defective item is tested in a pool with positive outcome. 
Thus, the expected value of $\ul{\mu}^T [\matX(:,j) - \matX(:,i)]$ will be negative for a 
non-negative $\ul{\mu}$
and, thus using concentration of measure arguments, we can expect $\mb{P}(\mc{E}_1'(j))$ 
to be small.
Thus, we expect that {\bf RoLpAl++} to perform similar (or even better) than {\bf RoLpAl}.

\subsection{Chernoff Bounds} \label{sec_chernoff}

\begin{theorem}
  (Bernstein Inequality \cite{Lugosi06com}) Let $X_1, X_2, \ldots, X_n$ be independent 
  real valued random variables, and assume that $|X_i| < c$ with probability one.
  Let $X = \sum_{i=1}^n X_i$, $\mu = \mb{E}(X)$ and $\sigma = \text{Var}(X)$. 
  Then, for any $\delta > 0$, the following hold:
  \begin{align}
    &\mb{P} \lb X > \mu + \delta \rb \le 
    \exp\lb - \frac{\delta^2}{2 \sigma^2 + \frac{2}{3}c \delta } \rb\\
    &\mb{P} \lb X < \mu - \delta \rb \le 
    \exp\lb - \frac{\delta^2}{2 \sigma^2 + \frac{2}{3}c \delta } \rb
  \end{align}
\end{theorem}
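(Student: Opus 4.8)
The plan is to establish the upper-tail inequality by the exponential-moment (Chernoff) method and then recover the lower-tail bound by symmetry. First I would center the summands: set $Y_i \triangleq X_i - \mb{E}(X_i)$, so that $\mb{E}(Y_i) = 0$, $\sum_{i=1}^n Y_i = X - \mu$, and $\text{Var}(Y_i) = \text{Var}(X_i)$. For any $s > 0$, Markov's inequality applied to the nonnegative random variable $e^{s(X-\mu)}$ together with independence gives
\[
\mb{P}\lb X > \mu + \delta \rb \le e^{-s\delta}\, \mb{E}\lb e^{s(X-\mu)} \rb = e^{-s\delta} \prod_{i=1}^n \mb{E}\lb e^{sY_i} \rb .
\]
The whole problem is thereby reduced to bounding each moment generating factor $\mb{E}(e^{sY_i})$ and then choosing $s$ well.

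The heart of the argument is to control $\mb{E}(e^{sY_i})$ using only the mean-zero property and the boundedness $|X_i| < c$. I would expand $e^{sY_i} = 1 + sY_i + \sum_{k \ge 2} \frac{s^k Y_i^k}{k!}$, take expectations so that the linear term vanishes, and bound the centered moments by $\mb{E}(Y_i^k) \le \text{Var}(Y_i)\, c^{k-2}$ for $k \ge 2$. Summing the resulting series $\sum_{k \ge 2} \frac{(sc)^{k-2}}{k!} \le \frac{1}{2(1 - sc/3)}$ and using $1 + x \le e^x$ then yields, for $0 < s < 3/c$,
\[
\mb{E}\lb e^{sY_i} \rb \le \exp\!\lb \frac{s^2 \, \text{Var}(Y_i)}{2(1 - sc/3)} \rb .
\]
Multiplying over $i$ and writing $\sigma^2 \triangleq \sum_{i=1}^n \text{Var}(Y_i) = \text{Var}(X)$ gives the aggregate bound $\prod_{i} \mb{E}(e^{sY_i}) \le \exp\!\big( \tfrac{s^2 \sigma^2}{2(1 - sc/3)} \big)$.

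It then remains to optimize the free parameter. Substituting back, I would minimize the exponent $-s\delta + \frac{s^2 \sigma^2}{2(1 - sc/3)}$ over $s \in (0, 3/c)$; the near-optimal choice $s = \frac{\delta}{\sigma^2 + c\delta/3}$ produces, after simplification, exactly the claimed tail bound $\exp\!\big( -\frac{\delta^2}{2\sigma^2 + \frac{2}{3}c\delta} \big)$. The lower-tail inequality follows immediately by applying the upper-tail result to the variables $-X_i$, which satisfy the same hypotheses (same bound $c$ and same variance). The step I expect to be the main obstacle is the moment generating function estimate: pinning down the precise $(1 - sc/3)$ denominator, and hence the constant $\tfrac{2}{3}c$ in the final bound, requires the sharp moment inequality $\mb{E}(Y_i^k) \le \text{Var}(Y_i)\, c^{k-2}$ (which in turn needs a careful accounting of the boundedness of the \emph{centered} variables) together with the exact summation of the tail series; everything else is routine.
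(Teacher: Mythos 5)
The paper offers no proof of this statement at all --- it is quoted as a known auxiliary tool from the cited lecture notes of Lugosi --- so your Chernoff-method argument stands or falls on its own. Its skeleton is the standard one and almost all of it checks out: Markov applied to $e^{s(X-\mu)}$ plus independence, the series estimate $\sum_{k\ge 2}\frac{(sc)^{k-2}}{k!}\le\frac{1}{2(1-sc/3)}$ for $0<s<3/c$ (correct, via $(j+2)!\ge 2\cdot 3^{j}$), the aggregate bound $\exp\bigl(\frac{s^2\sigma^2}{2(1-sc/3)}\bigr)$, and the choice $s=\frac{\delta}{\sigma^2+c\delta/3}$, which indeed makes the exponent collapse exactly to $-\frac{\delta^2}{2\sigma^2+\frac{2}{3}c\delta}$. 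The reduction of the lower tail to the upper tail via $-X_i$ is also fine.

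The gap is precisely at the step you yourself flagged as the main obstacle, and it cannot be repaired under the hypothesis as literally stated. The moment bound $\mathbb{E}(Y_i^k)\le \mathrm{Var}(Y_i)\,c^{k-2}$ requires the \emph{centered} variables to be bounded by $c$, but $|X_i|<c$ only gives $|Y_i|=|X_i-\mathbb{E}X_i|<2c$. Concretely, let $X$ take the value $c$ with probability $\epsilon$ and $-c$ with probability $1-\epsilon$: then $\mathbb{E}(Y^4)=16c^4\epsilon(1-\epsilon)\bigl(\epsilon^3+(1-\epsilon)^3\bigr)$, which tends to $4\,\mathrm{Var}(Y)\,c^2$ as $\epsilon\to 0$, a factor of $4$ above your claimed bound. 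With only $|Y_i|<2c$ available, your computation goes through verbatim with $c$ replaced by $2c$ and yields the denominator $2\sigma^2+\tfrac{4}{3}c\delta$, not the stated one; moreover, examining the Poisson limit of sums of the two-point variables above shows that the bound with $\tfrac{2}{3}c\delta$ is actually violated in the moderate-deviation regime, so no amount of ``careful accounting'' closes this gap. The resolution is that in the cited source the $X_i$ are assumed to have zero mean, so $Y_i=X_i$ and $|Y_i|\le c$, making your moment inequality valid; the paper dropped that hypothesis in transcription. If you restate the assumption as $\mathbb{E}X_i=0$ with $|X_i|\le c$ (or, equivalently for your argument, $|X_i-\mathbb{E}X_i|\le c$), your proof is complete and correct as written.
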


\section*{Acknowledgment}
The authors thank Anuva Kulkarni for several interesting discussions in the initial phase of this work.

\bibliographystyle{IEEEbib}
\bibliography{IEEEabrv,bibJournalList,refs_alg_fnds}

\begin{thebibliography}{10}

\bibitem{Dorfman_GT}
R.~Dorfman,
\newblock ``{The Detection of Defective Members of Large Populations},''
\newblock {\em The Annals of Mathematical Statistics}, vol. 14, no. 4, Dec.
  1943.

\bibitem{du2006}
D.~Du and F.~Hwang,
\newblock {\em Pooling designs and non-adaptive group testing: Important tools
  for DNA sequencing},
\newblock World Scientific, 2006.

\bibitem{sobel1959group}
M.~Sobel and P.~A. Groll,
\newblock ``Group testing to eliminate efficiently all defectives in a binomial
  sample,''
\newblock {\em Bell System Technical Journal}, vol. 38, no. 5, pp. 1179--1252,
  1959.

\bibitem{ND00}
H.~Q. Ngo and Ding-Zhu Du,
\newblock ``A survey on combinatorial group testing algorithms with
  applications to {DNA} library screening,''
\newblock in {\em Discrete mathematical problems with medical applications},
  vol.~55 of {\em DIMACS Ser. Discrete Math. Theoret. Comput. Sci.}, pp.
  171--182. Amer. Math. Soc., 2000.

\bibitem{Macula_data_pattern}
A.~J. Macula and L.~J. Popyack,
\newblock ``A group testing method for finding patterns in data,''
\newblock {\em Discrete Appl. Math.}, vol. 144, no. 1-2, pp. 149--157, 2004.

\bibitem{MV_ICML_2013}
Dmitry~M. Malioutov and Kush~R. Varshney,
\newblock ``{Exact Rule Learning via Boolean Compressed Sensing},''
\newblock in {\em Proceedings of the International Conference on Machine
  Learning (ICML 2013)}, 2013, pp. 775--783.

\bibitem{DashMV14}
Sanjeeb Dash, Dmitry~M. Malioutov, and Kush~R. Varshney,
\newblock ``Screening for learning classification rules via boolean compressed
  sensing,''
\newblock in {\em Proc.\ {ICASSP}}, 2014, pp. 3360--3364.

\bibitem{Wolf_multi_access}
J.~Wolf,
\newblock ``Born again group testing: Multiaccess communications,''
\newblock {\em {IEEE} Trans. Inf. Theory}, vol. 31, no. 2, pp. 185--191, Mar
  1985.

\bibitem{Cormode_whatshot}
Graham Cormode and S.~Muthukrishnan,
\newblock ``What's hot and what's not: Tracking most frequent items
  dynamically,''
\newblock {\em ACM Trans. Database Syst.}, vol. 30, no. 1, pp. 249--278, Mar.
  2005.

\bibitem{Gilbert_GT}
A.~C. Gilbert, M.~A. Iwen, and M.~J. Strauss,
\newblock ``Group testing and sparse signal recovery,''
\newblock in {\em Signals, Systems and Computers, 2008 42nd Asilomar Conference
  on}, Oct. 2008, pp. 1059--1063.

\bibitem{kautz_nonrand_supimp}
W.~Kautz and R.~Singleton,
\newblock ``Nonrandom binary superimposed codes,''
\newblock {\em Information Theory, IEEE Transactions on}, vol. 10, no. 4, pp.
  363--377, 1964.

\bibitem{Erdos_cover}
P.~Erdos, P.~Frankl, and Z.~Furedi,
\newblock ``Families of finite sets in which no set is covered by the union of
  r others,''
\newblock {\em Israel Journal of Mathematics}, vol. 51, no. 1-2, pp. 79--89,
  1985.

\bibitem{Ruszinko94}
M~Ruszink{\'o},
\newblock ``On the upper bound of the size of the r-cover-free families,''
\newblock {\em J. Comb. Theory, Ser. A}, vol. 66, no. 2, pp. 302--310, 1994.

\bibitem{Dachkov_bound}
A.~G. Dyachkov and V.~V. Rykov,
\newblock ``Bounds on the length of disjunctive codes,''
\newblock {\em Problems of Information Transmission}, vol. 18, no. 3, pp.
  7--13, 1982.

\bibitem{Sebo1985}
A.~Sebo,
\newblock ``On two random search problems,''
\newblock {\em Journal of Statistical Planning and Inference}, vol. 11, no. 1,
  pp. 23--31, Jan. 1985.

\bibitem{Atia_BooleanCS}
G.~Atia and V.~Saligrama,
\newblock ``Boolean compressed sensing and noisy group testing,''
\newblock {\em {IEEE} Trans. Inf. Theory}, vol. 58, no. 3, pp. 1880--1901,
  2012.

\bibitem{jaggi_gtalgo}
C.~L. Chan, S.~Jaggi, V.~Saligrama, and S.~Agnihotri,
\newblock ``Non-adaptive group testing: Explicit bounds and novel algorithms,''
\newblock {\em CoRR}, vol. abs/1202.0206, 2012.

\bibitem{Cabric05cognitive}
D.~Cabric, S.~M. Mishra, D.~Willkomm, R.~Brodersen, and A.~Wolisz,
\newblock ``A cognitive radio approach for usage of virtual unlicensed
  spectrum,''
\newblock in {\em Proc. of 14th IST Mobile Wireless Communications Summit},
  2005.

\bibitem{FCCreport}
FCC,
\newblock ``Et docket no. 02-155,''
\newblock {\em Spectrum policy task force report}, Nov. 2002.

\bibitem{AC_tvt_14}
A.~Sharma and C.~R. Murthy,
\newblock ``Group testing based spectrum hole search for cognitive radios
  (accepted),''
\newblock {\em {IEEE} Trans. Veh. Technol.}, 2014.

\bibitem{AC13}
A.~Sharma and C.~R. Murthy,
\newblock ``On finding a subset of healthy individuals from a large
  population,''
\newblock {\em CoRR}, vol. abs/1307.8240, 2013.

\bibitem{chen2008survey}
Hong-Bin Chen and F.~K Hwang,
\newblock ``A survey on nonadaptive group testing algorithms through the angle
  of decoding,''
\newblock {\em Journal of Combinatorial Optimization}, vol. 15, no. 1, pp.
  49--59, 2008.

\bibitem{vetterli_ngt}
M.~Cheraghchi, A.~Hormati, A.~Karbasi, and M.~Vetterli,
\newblock ``Group testing with probabilistic tests: Theory, design and
  application,''
\newblock {\em {IEEE} Trans. Inf. Theory}, vol. 57, no. 10, pp. 7057--7067,
  Oct. 2011.

\bibitem{Johnson_2014}
M.~Aldridge, L.~Baldassini, and O.~Johnson,
\newblock ``Group testing algorithms: Bounds and simulations,''
\newblock {\em {IEEE} Trans. Inf. Theory}, vol. 60, no. 6, pp. 3671--3687, June
  2014.

\bibitem{Yoo_arxiv_2013}
J.~Yoo, Y.~Xie, A.~Harms, W.~U. Bajwa, and R.~A. Calderbank,
\newblock ``Finding zeros: Greedy detection of holes,''
\newblock {\em CoRR}, vol. abs/1303.2048, 2013.

\bibitem{mtov_mtov_2012}
D.~M. Malioutov and M.~B. Malyutov,
\newblock ``Boolean compressed sensing: Lp relaxation for group testing.,''
\newblock in {\em Proc.\ {ICASSP}}. 2012, pp. 3305--3308, IEEE.

\bibitem{malyutov_1}
M.~B. Malyutov,
\newblock ``The separating property of random matrices,''
\newblock {\em Mat. Zametki}, vol. 23, no. 1, pp. 155--167, 1978.

\bibitem{saligrama_CS}
S.~Aeron, V.~Saligrama, and M.~Zhao,
\newblock ``Information theoretic bounds for compressed sensing,''
\newblock {\em {IEEE} Trans. Inf. Theory}, vol. 56, no. 10, pp. 5111 --5130,
  Oct. 2010.

\bibitem{nesterov_lect}
Yurii Nesterov,
\newblock {\em Introductory lectures on convex optimization : A basic course},
\newblock Kluwer Academic Publ., 2004.

\bibitem{Lugosi06com}
G.~Lugosi,
\newblock {\em Concentration-of-Measure Inequalities},
\newblock Lecture Notes. Available at http://www.econ.upf.edu/~lugosi/anu.pdf,
  2006.

\bibitem{boydcvxbook}
S.~Boyd and L.~Vandenberghe,
\newblock {\em {Convex Optimization}},
\newblock Cambridge University Press, Mar. 2004.

\bibitem{johnson_gtbp}
D.~Sejdinovic and O.~Johnson,
\newblock ``Note on noisy group testing: Asymptotic bounds and belief
  propagation reconstruction,''
\newblock in {\em Proc.\ Allerton Conf.\ on Commun., Control and Comput.},
  2010, pp. 998--1003.

\end{thebibliography}

\end{document}